\definecolor{dred}{rgb}{0.5,0,0}
\definecolor{red}{rgb}{0.8,0,0}
\definecolor{blue}{rgb}{0,0,0.5}
\definecolor{green}{rgb}{0,0.3,0}
\definecolor{grey}{rgb}{0.5,0.5,0.5}
\newcommand{\paper}{paper\xspace}
\newcommand{\Spwa}{\Sigma_\textnormal{PWA}}
\newcommand{\bSig}{\overline{\Sigma}}
\newcommand{\bSpwa}{\overline{\Sigma}_\textnormal{PWA}}
\theoremstyle{break}
\newtheorem{theorem}{Theorem}[section]
\newtheorem{assumption}[theorem]{Assumption}
\newtheorem{lemma}[theorem]{Lemma}
\newtheorem{proposition}[theorem]{Proposition}
\newtheorem{definition}[theorem]{Definition}
\newtheorem{remark}[theorem]{Remark}
\newtheorem{example}{Example}
\theoremstyle{nonumberbreak}
\newtheorem{proof}{Proof}
\begin{document}

\title{Incremental analysis of nonlinear systems with efficient methods for piecewise-affine systems}

\author{S.~Waitman,
P.~Massioni,
L.~Bako,
G.~Scorletti
and V.~Fromion}

\maketitle

\begin{abstract}
	This \paper is concerned with incremental stability properties of nonlinear systems. We propose conditions to compute an upper bound on the \dLdg and to assess incremental asymptotic stability of piecewise-affine (PWA) systems. The conditions are derived from dissipativity analysis, and are based on the construction of piecewise-quadratic functions via linear matrix inequalities (LMI) that can be efficiently solved numerically. The developments are shown to be less conservative than previous results, and are illustrated with numerical examples.
	
	In the last part of this \paper, we study the connection between \dLdg stability and incremental asymptotic stability. It is shown that, with appropriate observability and reachability assumptions on the input-output operator, \dLdg implies incremental asymptotic stability. Finally, it is shown that the converse implication follows provided some regularity conditions on the state space representation are met.
\end{abstract}

\paragraph{Keywords:} incremental stability; incremental gain; nonlinear systems; piecewise-affine systems; dissipativity; linear matrix inequalities

%%%%%%%%%%%%%%%%%%%%%%%%%%%%%%%%%%%%%%%%%%%%%%%%%%%%%%%%%%%%%%%%%%%%%%%%%%%%%%%%

\section{Introduction}

The concept of incremental stability concerns the behavior of each trajectory with respect to each other, as opposed to an equilibrium point. There exist in the literature a variety of definitions concerning incremental stability, both from the input-output and state-space points of view. Concerning the former, Zames introduced the \emph{maximum incremental amplification}~\cite{Zames1963} and used it to establish conditions for the stability of feedback loops~\cite{Zames1966a,Zames1966b}. This notion was later extended, see e.g. the \emph{generalized incremental gain}~\cite{Chitour1995} and differential stability~\cite{Georgiou1993}, and proposed as part of a framework to tackle robust performance analysis of nonlinear systems~\cite{Fromion2001a}. With respect to the latter, we may cite incremental asymptotic stability and incremental input-to-state stability~\cite{Angeli2002}, extensions of their counterparts from Lyapunov theory and Sontag's input-to-state stability, as well as convergence~\cite{Pavlov2006} and contraction~\cite{Lohmiller1998}, among some other variants. In common between these definitions is the fact that an incremental notion of stability ensures stronger properties on the behavior of the system than its non-incremental counterpart. Among these, we may cite the existence of a unique asymptotically stable constant (resp. $T$-periodic) trajectory in response to a constant (resp. $T$-periodic) input, the asymptotic independence of initial conditions and the unicity of the steady state~\cite{Angeli2002,Pavlov2006,Lohmiller1998,Fromion1997}. The aforementioned properties make incremental stability notions a suitable tool to deal with tracking and synchronization problems, as well as observer design.

In~\cite{Fromion2001a,Fromion1995}, the weighted \dLdg is proposed as a means to extend \Hoo analysis to the nonlinear context. This approach couples the quantitative characterization of performance through the addition of weighting functions with the qualitative behavior of incrementally stable systems. This enables to study robust stability and performance of nonlinear systems, and to address quantitative specifications concerning tracking/synchronization and disturbance attenuation~\cite{Fromion2003} in a way similar to that of linear time-invariant systems. The drawback of this method lies in the complexity of the conditions allowing assessment of \dLdg stability. Indeed, \cite{Romanchuk1996} proposed necessary and sufficient conditions for \dLdg stability based on the
celebrated dissipativity framework~\cite{Willems1972a}. The analysis amounts to searching for a solution of a Hamilton-Jacobi-Bellman inequality~\cite{James1993}, a problem of infinite dimension involving a partial differential inequality (PDI). Although numerical procedures to find approximate solutions to the PDI exist~\cite{James1995}, the analysis may become intractable for complex nonlinear systems. A different approach is to search for relaxed sufficient conditions to compute an upper bound on the \dLdg. In~\cite{Fromion1999}, the notion of \emph{quadratic incremental stability} is introduced, and the analysis is conducted by embedding the dynamics of the time-varying linearizations of the system in a linear parameter-varying (LPV) model with polytopic description. 
%Concerning the stabilization problem,~\cite{Beikzadeh2015} proposes conditions based on quadratic Lyapunov functions to construct a static controller ensuring \dLdg stability of the closed-loop system. 
The drawback of performing analysis based on relaxed sufficient conditions comes in the form of conservativeness, and to try to cope with that we shall focus the analysis on piecewise-affine (PWA) systems.

The interest in nonlinear systems described by piecewise-affine functions is not new (see e.g.~\cite{Johansson2003} for a historical review). This may be credited to two concurring factors: 1) PWA functions allow the description of a wide range of nonlinearities appearing in applied control theory -- such as saturations, relays and dead zones -- as well as the approximation of a broad class of nonlinear functions; 2) Their description remains quite similar to that of LTI systems, so that some of the results from linear control theory can be efficiently transposed, notably with respect to the possibility of recasting the analysis as an optimization problem constrained by linear matrix inequalities (LMIs). Johansson and Rantzer pioneered the analysis of piecewise-affine systems by introducing piecewise-quadratic Lyapunov functions~\cite{Johansson1998}. In this sense, the analysis becomes local as to each region corresponds a different quadratic function. This was made possible via application of \Sproc techniques~\cite{Boyd1994}, and the approach was shown to provide less conservative results than those obtained with single quadratic Lyapunov functions. Several extensions were subsequently proposed, enabling to consider stabilization~\cite{Hassibi1998} and computation of an upper bound on the \Ldg~\cite{Rantzer2000}, among others. 

The study of incremental stability properties of PWA systems has been addressed before in the literature. In the context of convergent systems,~\cite{Pavlov2007} casts the analysis of PWA systems with continuous and discontinuous right-hand side as a search for a quadratic Lyapunov-like function. Romanchuk and Smith considered the \dLdg stability of PWA systems, and proposed conditions to construct a quadratic storage function~\cite{Romanchuk1999}. In common between both approaches is the proposal of LMI constraints and the restriction to quadratic functions. % of the difference between two trajectories of the system. 

This \paper extends the results presented in~\cite{Waitman2016}. Based on dissipativity arguments, we propose conditions to construct storage functions and so-called incremental Lyapunov functions possessing a piecewise-quadratic structure. This is possible as the argument of these functions is not necessarily taken to be the difference between two states. Hence, the results are more general then those in~\cite{Romanchuk1999}, and thus potentially less conservative. The proposed conditions are expressed as LMI-constrained optimization problems that can be very efficiently solved by semidefinite programming solvers. 

The aim of this \paper is twofold. First, we provide sufficient conditions to assess \dLdg and asymptotic stability of piecewise-affine systems. The proposed conditions are shown to be similar, but not equivalent. The last part of this article proposes a study of the connection between both definitions, through the use of convenient observability and reachability assumptions on the state space realization of the input-output operator. The \paper is organized as follows. Section~\ref{se:IncStabNL} presents the definitions of incremental stability adopted in this \paper, along with the related functional problems allowing their assessment. In Section~\ref{se:AnalPWA} we present conditions for the incremental analysis of PWA systems, which are illustrated through numerical examples in Section~\ref{se:NumEx}. Lastly, Section~\ref{se:ObsReachCon} establishes a connection between both incremental properties at hand through some assumptions on the observability and controllability of the state representation of the nonlinear system.

\subsection*{Notation} We denote by $\norm{\cdot}$ the Euclidean norm for vectors or the corresponding induced norm for matrices. The real half line $[0,+\infty)$ is denoted by $\R_+$, and $\R_+\backslash\{0\}$ is denoted by $\R^\ast_+$. The extended real line $\R \cup \{-\infty,+\infty\}$ is denoted by $\bR$, and the half-line $\R_+\cup\{+\infty\}$ by $\bR_+$. For a vector $v = (v_1,\ldots,v_n)$, $v \succ 0$ (resp. $v \succeq 0$) is equivalent to the componentwise inequality $v_i > 0$ (resp. $v_i \geq 0$), $\forall	i \in \{1,\ldots,n\}$. For a matrix $A \in \R^{n\times n}$, $A \succ 0$ (resp. $A \succeq 0$) denotes that $A$ is positive definite (resp. semi-definite). The symbol $\bullet$ replaces the corresponding symmetric block in a symmetric matrix. The column concatenation of two matrices $A$ and $B$ of compatible dimensions, denoted by $\col$, is such that $\col(A,B) = {\text{\footnotesize$\nmatrix{c}{A\\B}$}}$. The $n \times n$ identity matrix is denoted by $I_n$, and $\bI_n \in \R^{2n\times 2n}$ and $\bJ_n \in \R^{(2n+1)\times (2n+1)}$ denote the following matrices
\begin{align}
 	\bI_n &= \nmatrix{cc}{I_n & -I_n \\ -I_n & I_n} & 
 	\bJ_n &= \nmatrix{ccc}{I_n & -I_n & 0 \\ -I_n & I_n & 0 \\ 0 & 0 & 0}
\end{align}

$\LdqRp$ is the space of square integrable $\R^q$-valued functions defined on $\R_+$, and the associated norm is defined by ${\|f\|_2=({\int\|f(t)\|^2 dt}})^{1/2}$. The causal truncation $P_T f$ is defined by $P_Tf(t)=f(t)$ for $t\leq T$ and $0$ otherwise. The {\em extended space} $\LdeqRp$ is the space of $\R^q$-valued functions defined on $\R_+$ whose causal truncations belong to $\LdqRp$ for any $T \geq 0$. The function $\phi: \R_+ \times \R_+ \times X \times \Ldep(\R_+) \rightarrow X$ is called the \emph{state transition map} and is such that $x = \phi(t,t_0,x_0,u)$ is the state $x \in X$ attained at instant $t$ when the system is driven from $x_0 \in X$ at the instant $t_0$ by the input $u$. Let $u_\tau$ denote the time shifted version of $u$, defined by $u_\tau(t) := u(t -\tau)$. A dynamical system $\Sigma$ is said to be stationary if, for every $\tau \in \R$, we have $\phi(t,t_0,\xo,u) = \phi(t+\tau,t_0+\tau,\xo,u_\tau)$.

A function $\rho: \R_+ \rightarrow \R_+$ is said to be positive definite if it is such that $\rho(0) = 0$ and $\rho(r) >0$, $\forall r \neq 0$. We denote by $\K$ the class of continuous and strictly increasing functions $\alpha: \R_+ \rightarrow \R_+$ for which $\alpha(0) = 0$. A function $\alpha$ is of class $\Koo$ if it is of class $\K$ and unbounded. A continuous function $\beta: \R_+\times\R_+ \rightarrow \R_+$ is of class $\KL$ if for any fixed $t \geq 0$, $\beta(\cdot,t) \in \K$ and, for any fixed $s$, $\beta(s,\cdot)$ is decreasing with $\lim_{t\rightarrow\infty} \beta(s,t) = 0$. The identity function is denoted by $\id$, and function composition is represented by the symbol $\circ$. For $\rho$ a function, $\rho^{k+1}$ is recursively defined by $\rho^{k+1} := \rho \circ \rho^k$, for $k \geq 1$. The floor function, denoted by $\lfloor \, \rfloor$, is such that $\lfloor r \rfloor$ is the largest integer less than or equal to $r$.

\section{Incremental stability properties}
\label{se:IncStabNL}

\subsection{Preliminaries}
\label{ss:Prel}

Let us consider an autonomous dynamical system $\Sigma_{\xo}: \LdepRp \longrightarrow \LdemRp$ with a state space representation given by
\begin{equation}
\label{eq:DynSys}
	y = \Sigma_{\xo}(u) \left\lbrace \begin{aligned} \dot{x}(t) &= f(x(t),u(t)) \\ y(t) &= h(x(t),u(t)) \\ x(0) &= x_0
	\end{aligned}\right.
\end{equation}
where $x(t)\in X \subseteq \R^n$ is the state, $u \in \LdepRp$ is the input taking values in $U = \R^p$ containing the origin, and $y \in \LdemRp$ is the output taking values in $Y = \R^m$. The functions $f:\R^n\times\R^p\rightarrow\R^n$ and $h:\R^n\times\R^p\rightarrow\R^m$ are assumed to be Lipschitz continuous, and are such that $f(0,0)=0$ and $h(0,0)=0$, so that the origin is an equilibrium point associated to the null input with zero output. For the sake of notation, we shall drop the subscript from $\Sigma_{\xo}$ and note simply $\Sigma$. It should be noted, however, that the initial condition has an impact on the input-output behavior of $\Sigma$, which will be important in what follows.

In this \paper, we shall be concerned with establishing efficient methods to assess incremental properties of system~\eqref{eq:DynSys}. We shall consider two incremental stability notions: \dLdg stability and incremental asymptotic stability. The former is an input-output property, based on the energy relation of any two different outputs with respect to that of the respective inputs, whereas the latter concerns the behavior of state trajectories with respect to initial conditions when the system is driven by a certain input.

As it will be elaborated in the sequel, the assessment of both of these incremental properties might prove too difficult when dealing with general nonlinear systems. Indeed, both may be obtained by solving related functional problems, a task which is far from trivial in most cases. For this reason, the primary aim of this \paper it to provide sufficient conditions to assess incremental properties for a class of nonlinear systems, namely piecewise-affine (PWA) systems. Let us consider a dynamical system $\Spwa: \LdepRp \rightarrow \LdemRp$ with a PWA representation given by
\begin{equation}
\label{eq:PWAsysUY}
	y \!=\! \Spwa(u)\!\left\{
	\begin{aligned} &
		\begin{aligned}
			\dot{x}(t) &= A_ix(t) + a_i + B_iu(t) \\
				  y(t) &= C_ix(t) + c_i + Du(t) \\			  
		\end{aligned} & \text{for } \xt\in X_i \\
		& x(0) = x_0
	\end{aligned}
	\right.		
\end{equation}
where the regions $X_i$, for $i \in \I := \{1, \ldots, N\}$, are closed convex polyhedral sets defined by $X_i = \{ x \in X \mid G_ix + g_i \succeq 0 \}$ with non-empty and pairwise disjoint interiors such that $\bigcup_{i\in \I} X_i= X$. Then, $\{X_i\}_{i \in \I}$ constitutes a finite partition of $X$. From the geometry of $X_i$, the intersection $X_i \cap X_j$ between two different regions is always contained in a hyperplane, i.e. $X_i \cap X_j \subseteq \left\{ x \in X \mid E_\ij x + e_\ij = 0\right\}$.

We require that~\eqref{eq:PWAsysUY} respects the following assumptions.

\begin{assumption}
\label{ass:ZeroEqui}
	For any $i \in \I$, $0 \in X_i$ implies $a_i = 0$ and $c_i = 0$, so that $x = 0$ is an equilibrium point of~\eqref{eq:PWAsysUY} with zero input and zero output.
\end{assumption}

\begin{assumption}
\label{ass:NoSlidMod}
	The PWA system~\eqref{eq:PWAsysUY} does not present Zeno behavior.
\end{assumption}

We say that the system presents Zeno behavior when there is an infinite number of switches in a bounded interval of time. There exist two types of so-called Zeno behavior, namely \emph{chattering Zeno} and \emph{genuine Zeno}~\cite{Ames2005}. The former is characterized by a null dwell-time, when the trajectory evolves on a switching surface, which is generally called a \emph{sliding mode}.  The latter, on the contrary, is characterized by strictly positive dwell-times, which tend to zero as the time approaches the so-called breaking Zeno point. A sufficient condition to ensure the non-existence of sliding modes is the Lipschitz continuity of the right-hand side of the differential equation in~\eqref{eq:PWAsysUY}. Conditions to ensure Lipschitz continuity are given in Appendix~\ref{ss:LipCont}. In the case of discontinuous right-hand side, it might be difficult to ensure non-existence of sliding modes in the general case (see e.g.~\cite{Johansson2003}). Genuine Zeno behavior is somewhat more complicated to exclude. \cite{Thuan2014} states that no Zeno behavior exist in the case where the right-hand side of the ODE is continuous and the input is piecewise real-analytic. \cite{Khan2015} extends the results to ensure non-existence of Zeno behavior when the right-hand side of the ODE is continuous and the input is left/right-analytic.

In the remaining of this section we shall formally define the incremental properties we aim to study, and present the general approach to establish them. We then propose the main results in Section~\ref{se:AnalPWA}.

\subsection{Incremental \Ldg and dissipativity}
\label{ss:IncL2Diss}

In this section we present the adopted definition of \dLdg together with the associated machinery from the framework of dissipative systems~\cite{Willems1972a}.  We begin by recalling the definition of \Ldg stability of nonlinear systems.

\begin{definition}[\Ldg stability]
\label{def:L2}
	The system~\eqref{eq:DynSys} is said to be \Ldg stable if there exists $\gamma \geq 0$ such that for all $u \in \LdpRp$ we have
	\begin{equation}
	\label{eq:L2cond}
	 	\int_{0}^\infty \! \norm{\yt}^2\,dt \leq \gamma^2 \! \int_{0}^\infty \! \norm{\ut}^2 \,dt
	\end{equation}
	for $y = \Sigma(u)$ with $x_0 = 0$. We define the \Ldg of $\Sigma$ as the smallest $\gamma$ for which~\eqref{eq:L2cond} holds.
\end{definition}

We now proceed to define the \dLdg of a dynamical system.

\begin{definition}[Incremental \Ldg stability]
\label{def:dL2}
	The system~\eqref{eq:DynSys} is said to be incrementally \Ldg stable if it is \Ldg stable and there exists $\eta \geq 0$ such that for all $u$, $\tu\in\LdpRp$ we have
	\begin{equation}
	\label{eq:IncL2cond}
		\int_{0}^\infty \! \norm{\yt - \tyt}^2\,dt \leq \eta^2 \! \int_{0}^\infty \! \norm{\ut - \tut}^2 \,dt
	\end{equation}
	for $y = \Sigma(u)$ and $\ty = \Sigma(\tu)$ with the same initial condition $x_0$. We define the \dLdg of $\Sigma$ as the smallest $\eta$ for which~\eqref{eq:IncL2cond} holds.
\end{definition}

We note that, given the assumptions on functions $f$ and $h$, boundedness  of the \Ldg is implied by \dLdg stability. We shall make use of the framework of dissipative systems~\cite{Willems1972a}, a standard procedure when studying input-output properties such as boundedness and passivity. The following recalls the main concepts needed.

\begin{definition}[Dissipative system]
	A dynamical system $\Sigma$ is said to be dissipative with respect to the \emph{supply rate} $w: U \times Y \rightarrow \R$ if there exists a nonnegative function $S: X \longrightarrow \R_+$, called the \emph{storage function}, such that for all $t_1, t_0 \in \R_+, \, t_1 \geq t_0$, and $u \in \LdepRp$,
	\begin{equation}
	\label{eq:DI}
	 	S(x(t_1)) - S(x(t_0)) \leq  \int_{t_0}^{t_1} \! w(u(t),y(t)) \; dt 
	\end{equation}
	where $x(t_1) = \phi(t_1,t_0,x(t_0),u)$ and $y = \Sigma(u)$. 
\end{definition}	

The incremental \Ldg stability of system~\eqref{eq:DynSys} can be assessed via dissipativity analysis of a fictitious augmented system $\bSig: \LdepRp \times \LdepRp \rightarrow \LdemRp$ given by

\begin{equation}
\label{eq:DynSysAug}
	\by = \bSig(u,\tu) \left\lbrace \begin{aligned} \dxt &= f(\xt,\ut) \\ \dtxt &= f(\txt,\tut) \\ \byt &= h(\xt,\ut) - h(\txt,\tut) \\
		 x(0) &= x_0 \\
		 \tx(0) &= \tx_0 \\
	\end{aligned}\right.
\end{equation}

We note that $\bSig(u,\tu) := \Sigma(u) - \Sigma(\tu)$. Before stating the connection between \dLdg stability and dissipativity of the augmented system, we need to recall the definition of reachability, as follows.

\begin{definition}[Reachability]
	The state space of $\Sigma$ is said to be reachable from $x_0$ if given any $x \in X$ and $t \geq 0$, there exist $u \in \LdepRp$ and $T_r \geq 0$ such that $x = \phi(t,t-T_r,x_0,u)$.
\end{definition}

The next theorem is an important result concerning the equivalence between \dLdg stability and dissipativity of the augmented system.

\begin{theorem}[Incremental dissipativity]
\label{th:IncDiss}
	 Let $\Sigma$ be a stationary dynamical system with a reachable state space $X$ from $x_0$. The following statements are equivalent:
	 \begin{enumerate}[(i)]
	 	\item $\Sigma$ is incrementally \Ldg stable;
	 	\item $\bSig$ is dissipative with respect to the supply rate function $\bw: U \times U \times Y \rightarrow \R$ given by
	 		\begin{equation}
			\label{eq:IncSupRate}
				\bw(u,\tu,\by) = \eta^2\norm{u - \tu}^2 - \norm{\by}^2
			\end{equation}
			and there exists a storage function $S$, defined from $X \times X$ into $\R_+$, i.e., independent of time, such that $S(x,x) = 0$, $\forall x \in X$.
	 \end{enumerate}
\end{theorem}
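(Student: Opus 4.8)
The plan is to establish the two implications separately, obtaining the storage function in the difficult direction from Willems' available storage applied to the augmented system $\bSig$.

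For (ii)$\Rightarrow$(i), I would simply instantiate the dissipation inequality~\eqref{eq:DI} for $\bSig$ on $[0,T]$ with the two trajectories sharing the initial condition $x_0$, so that the augmented initial state is $(x_0,x_0)$ and $S(x_0,x_0)=0$. Nonnegativity of $S$ then yields $\int_0^T\norm{\by}^2\,dt \le \eta^2\int_0^T\norm{u-\tu}^2\,dt$, and letting $T\to\infty$ recovers~\eqref{eq:IncL2cond}; \Ldg stability follows from the remark preceding the theorem by taking $\tu=0$, $\tilde{x}_0=0$. This direction uses neither stationarity nor reachability.

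For (i)$\Rightarrow$(ii), the candidate storage function is the available storage of $\bSig$,
\[
	S(\xi,\tilde{\xi}) := \sup \int_0^T \bigl(\norm{\by}^2 - \eta^2\norm{u-\tu}^2\bigr)\,dt,
\]
the supremum being over $T\ge 0$ and inputs $u,\tu\in\LdepRp$, with $\by$ the output of~\eqref{eq:DynSysAug} started at $(x(0),\tilde{x}(0))=(\xi,\tilde{\xi})$. Taking $T=0$ shows $S\ge 0$. The identity $S(x,x)=0$ I would obtain by prepending to any competitor pair $(u,\tu)$ a \emph{common} steering input $u_s=\tu_s$ that drives a single copy from $x_0$ to $x$ (which exists by reachability) over $[0,T_s]$: along this phase both copies coincide, so the supply integrand vanishes, and the concatenated trajectories start from $(x_0,x_0)$. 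A finite-horizon form of~\eqref{eq:IncL2cond}---derived by applying the incremental bound to the truncations $P_{T_s+T}u$, $P_{T_s+T}\tu$ and invoking causality of $\Sigma$---together with stationarity to shift the second phase back to $(x,x)$, then forces the competitor's payoff to be $\le 0$, whence $S(x,x)=0$ for every reachable $x$, hence on all of $X$.

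The crux is finiteness of $S$ at an arbitrary augmented state $(\xi,\tilde{\xi})$. Here I would steer the two copies \emph{independently} from $x_0$ to $\xi$ and to $\tilde{\xi}$, so that $(x_0,x_0)$ reaches $(\xi,\tilde{\xi})$ in some common time $T_s$ under inputs $(u_s,\tu_s)$, and concatenate with an arbitrary competitor $(u,\tu)$. Applying the finite-horizon incremental inequality to the concatenation, splitting over $[0,T_s]$ and $[T_s,T_s+T]$ and shifting the second piece by stationarity, bounds the competitor's payoff by $\eta^2\int_0^{T_s}\norm{u_s-\tu_s}^2\,dt$, a quantity depending only on $(\xi,\tilde{\xi})$; taking the supremum gives $S(\xi,\tilde{\xi})<\infty$. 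The delicate point, which I expect to be the main obstacle, is precisely the existence of a \emph{common} steering time $T_s$ for two distinct targets: the stated reachability of $X$ from $x_0$ furnishes each target with some reaching time, and promoting this to reachability of the augmented state $(\xi,\tilde{\xi})$ from $(x_0,x_0)$---synchronizing the two durations, again via stationarity---is where the reachability hypothesis is essential.

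Finally, that this finite, nonnegative $S$ satisfies the dissipation inequality~\eqref{eq:DI} for $\bSig$ is the standard dynamic-programming (concatenation/sub-optimality) property of the available storage: what one can extract over $[t_0,t_1]$ and thereafter cannot exceed the supply absorbed on $[t_0,t_1]$ plus what can be extracted optimally from the intermediate state. Stationarity guarantees that the supremum defining $S$ is independent of the absolute initial time, so that $S$ is a genuine time-independent map $X\times X\to\R_+$ with $S(x,x)=0$, as required, completing (i)$\Rightarrow$(ii).
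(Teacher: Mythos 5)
Your architecture is sound and is, in substance, the classical argument: note that the paper itself does not prove Theorem~\ref{th:IncDiss} but cites~\cite{FromionBook} and Lemmas 3.1--3.2 of~\cite{Fromion1997}, and those lemmas rest on exactly the available-storage construction you reproduce. Your (ii)$\Rightarrow$(i) direction, the identity $S(x,x)=0$ via a common steering prefix (here only single-target reachability is needed, since the two copies coincide along the prefix by uniqueness of solutions of the Lipschitz ODE), and the dynamic-programming proof of the dissipation inequality~\eqref{eq:DI} for the available storage are all correct as sketched. One small slip in (ii)$\Rightarrow$(i): Definition~\ref{def:dL2} requires the two trajectories to share the \emph{same} initial condition, so to recover plain \Ldg stability you should take the shared initial state to be $0$ (as Definition~\ref{def:L2} demands) and $\tu\equiv 0$, whence $\ty\equiv 0$ because $f(0,0)=0$ and $h(0,0)=0$; writing ``$\tx_0=0$'' with $x_0$ left free is inconsistent with the definition.

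The genuine gap is precisely the point you flag and then wave away: synchronizing the two steering durations ``via stationarity'' does not work. Stationarity lets you \emph{translate} a maneuver in time --- replace $\phi(T_2,0,x_0,u)$ by $\phi(T_1,T_1-T_2,x_0,u_{T_1-T_2})$ --- but it never changes a duration, and it says nothing about what the faster-steered copy is doing on $[0,T_1-T_2)$, nor that its state at time $T_1-T_2$ equals $x_0$ so that the translated maneuver can be concatenated. The paper's definition of reachability quantifies the duration $T_r$ existentially, target by target, so joint reachability of $(\xi,\tilde{\xi})$ from $(x_0,x_0)$ at a \emph{common} time --- which your finiteness bound on the available storage of~\eqref{eq:DynSysAug} genuinely requires, since the incremental inequality~\eqref{eq:IncL2cond} only applies to pairs launched simultaneously from the same $x_0$ --- simply does not follow. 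To close the hole you need an extra ingredient: either (a) holdability of the initial state, i.e.\ an admissible input keeping the state at $x_0$ (for instance $x_0=0$ with $u\equiv 0$, available here since $f(0,0)=0$), so that the copy with the shorter steering time waits before starting its maneuver; or (b) a strengthened reachability hypothesis (reachability with prescribed duration, or directly reachability of the product system from $(x_0,x_0)$), which is in effect what the cited source works with. Absent one of these, the finiteness of $S_a(\xi,\tilde{\xi})$ at a generic pair --- the crux, as you rightly identify --- remains unproven.
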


\begin{proof}
	Taken from~\cite{FromionBook}. It is a consequence of Lemmas 3.1 and 3.2 in~\cite{Fromion1997}.
\end{proof}

\begin{remark}
	A similar treatment is presented in~\cite{Romanchuk1996}, albeit with some minor differences. If reachability is not assumed, condition (ii) becomes only sufficient to obtain (i).
\end{remark}

For the sake of notation, we shall say that $\Sigma$ is \emph{incrementally dissipative} when the equivalent statements in Theorem~\ref{th:IncDiss} are satisfied.

We aim to find sufficient conditions to compute an upper bound on the \dLdg, and we focus on the class of PWA systems. This problem was tackled before in~\cite{Romanchuk1999}, where the search for a quadratic storage function of the form $S(x,\tx) = (x - \tx)^TP(x - \tx)$ is expressed as an LMI-constrained optimization problem. We shall propose less conservative results based on the search for piecewise-quadratic storage functions of $x$ and $\tx$, as will be presented in Section~\ref{se:AnalPWA}.

\subsection{Incremental asymptotic stability}
\label{ss:IncAsStab}

We now consider the notion of incremental asymptotic stability. As is the case with its non-incremental counterpart, we study the influence of the initial condition on the trajectories of the system, but now subjected to a certain input belonging to a specific class. The following definition is adapted from~\cite{Angeli2002}.

\begin{definition}[Incremental asymptotic (exponential) stability]
\label{def:dAS}
	Let $\X \subseteq X$ be a non-empty connected region of the state space, and let $\U \subseteq \LdepRp$ be a class of input functions. We say that system~\eqref{eq:DynSys} is incrementally asymptotically stable on $\X$ with respect to $\U$ if there exists a function $\beta$ of class $\KL$ so that for all $x_0, \tx_0 \in \X$ and all $t \geq 0$ the following holds
	\begin{equation}
	\label{eq:dGAS}
		\norm{x(t) - \tx(t)} \leq \beta(\norm{x_0 - \tx_0}, t)
	\end{equation}
	with $\xt = \phi(t,0,x_0,u)$ and $\txt = \phi(t,0,\tx_0,u)$ for any $u \in \U$. If there exist $d,\lambda > 0$ such that $\beta(r,t) \leq de^{-\lambda t}r$, the system is said to be incrementally exponentially stable on $\X$ with respect to $\U$. If $\X = X = \R^n$, the system is said to be incrementally globally asymptotically (exponentially) stable with respect to $\U$.
\end{definition}

\begin{remark}
	The notion of incremental asymptotic stability, as stated here, can be related to the so called \emph{asymptotic stability of the unperturbed motion}, see e.g.~\cite{Fromion1997}.
\end{remark}

It is interesting to note the relation between incremental asymptotic stability and its non-incremental counterpart. The usual definitions of asymptotic stability of~\eqref{eq:DynSys} concern the stability of the origin subject to the state equation $\dxt = f(\xt,0)$, i.e. associated to the input $u \equiv 0$. Incremental asymptotic stability, on the other hand, is concerned with the stability of any equilibrium point or trajectory associated to a certain input $u$.

In~\cite[Theorem 1]{Angeli2002}, a slightly different definition of incremental asymptotic stability is shown to be equivalent to the existence of what we may call an incremental Lyapunov function. For completeness, we restate this theorem here as a sufficient condition in view of the definition adopted in the present \paper.

\begin{theorem}
\label{th:dAS}
	System~\eqref{eq:PWAsysUY} is incrementally asymptotically stable with respect to $\U$ as in Definition~\ref{def:dAS} if there exist a continuous function $V: X \times X \rightarrow \R_+$, called an incremental Lyapunov function, and $\Koo$ functions $\alpha_1$ and $\alpha_2$ such that
	\begin{equation}
	\label{eq:dASnorm}
		\alpha_1\big(\!\norm{x - \tilde{x}}\!\big) \leq V(x,\tilde{x}) \leq \alpha_2\big(\!\norm{x - \tilde{x}}\!\big)
	\end{equation}
	and along any two trajectories $x,\tx$, starting respectively from $x_0$ and $\tx_0$ under input $u \in \U$, $V$ satisfies for any $t \geq 0$
	\begin{equation}
	\label{eq:dASnegdef}
	 	V(\xt,\txt) - V(x_0,\tx_0) \leq -\int_0^t\!\rho\big(\!\norm{x(\tau) - \tx(\tau)}\!\big) \, d\tau
	\end{equation}
	with $\xt = \phi(t,0,x_0,u)$, $\txt = \phi(t,0,\tx_0,u)$ and $\rho$ a positive definite function.
\end{theorem}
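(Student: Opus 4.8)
The plan is to reduce the vector estimate \eqref{eq:dGAS} to a scalar comparison argument on the composite function $W(t) := V(x(t),\tilde x(t))$, following the classical Lyapunov route to a class $\KL$ bound. I would fix an input $u \in \U$ and initial conditions $x_0,\tilde x_0$, write $e(t) := x(t) - \tilde x(t)$ for the trajectory difference, and read the sandwich bound \eqref{eq:dASnorm} as $\alpha_1(\norm{e(t)}) \le W(t) \le \alpha_2(\norm{e(t)})$. Since $\alpha_1,\alpha_2 \in \Koo$ are invertible, it then suffices to produce a class $\KL$ bound on $W(t)$ in terms of $W(0)$; the desired function will be recovered as $\beta(r,t) := \alpha_1^{-1}(\kappa(\alpha_2(r),t))$ for a suitable $\KL$ function $\kappa$, and this $\beta$ is again of class $\KL$ because $\alpha_1^{-1},\alpha_2 \in \Koo$.

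First I would establish stability and confinement to a compact set. Applying \eqref{eq:dASnegdef} and using $\rho \ge 0$ gives $W(t) \le W(0) \le \alpha_2(\norm{e(0)})$ for all $t \ge 0$, so that $W$ is non-increasing; combining with the lower bound yields $\norm{e(t)} \le \alpha_1^{-1}(\alpha_2(\norm{e(0)})) =: r_0$. Thus the trajectory difference remains in the ball of radius $r_0$ for all time. Restarting the dissipation inequality at an arbitrary intermediate time via the semigroup property of $\phi$ shows, moreover, that the decrease holds over every subinterval, which is what I need to localize it.

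Next I would convert the positive-definite rate $\rho$ into a usable comparison function. Since $\rho$ is only positive definite, I would invoke the standard fact that on the compact interval $[0,r_0]$ there exists $\alpha_3 \in \K$ with $\alpha_3(s) \le \rho(s)$ for all $s \in [0,r_0]$ (for instance a continuous strictly increasing minorant of $s \mapsto \min_{s \le \tau \le r_0}\rho(\tau)$). From the localized form of \eqref{eq:dASnegdef} together with continuity of $\tau \mapsto \rho(\norm{e(\tau)})$, the upper right Dini derivative obeys $D^+ W(t) \le -\rho(\norm{e(t)})$ for every $t \ge 0$. Using $\norm{e(t)} \ge \alpha_2^{-1}(W(t))$ from the upper sandwich bound and the monotonicity of $\alpha_3$ then produces the scalar differential inequality $D^+ W(t) \le -\sigma(W(t))$, with $\sigma := \alpha_3 \circ \alpha_2^{-1} \in \K$.

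Finally I would apply the comparison lemma: the maximal solution of $\dot\mu = -\sigma(\mu)$, $\mu(0) = W(0)$, has the form $\mu(t) = \kappa(W(0),t)$ for some $\kappa$ of class $\KL$, and $W(t) \le \mu(t)$ for all $t \ge 0$. Chaining the estimates gives $\norm{e(t)} \le \alpha_1^{-1}(W(t)) \le \alpha_1^{-1}(\kappa(\alpha_2(\norm{e(0)}),t)) = \beta(\norm{x_0 - \tilde x_0},t)$, which is precisely \eqref{eq:dGAS}. Because $\alpha_1,\alpha_2,\rho$, and therefore $\alpha_3,\sigma,\kappa$ and $\beta$, are independent of the particular input, the same $\beta$ serves every $u \in \U$, as Definition~\ref{def:dAS} demands. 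The main obstacle is the passage from the merely positive-definite rate $\rho$ to a class $\K$ minorant, compounded by the fact that $V$ is only continuous so that $W$ need not be differentiable; both are handled by restricting to the compact set $[0,r_0]$ and by working with Dini derivatives and the comparison principle rather than a classical derivative of $W$.
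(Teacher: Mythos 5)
The paper itself offers no proof of this theorem: it is restated from~\cite{Angeli2002} as a sufficient condition, with the proof deferred to that reference. So your proposal is supplying the standard literature argument rather than competing with one in the text, and the comparison-lemma route you take is indeed the classical one. Its skeleton is sound, but two steps need repair before it closes. First, the localized decrease $D^+W(t) \leq -\rho(\norm{e(t)})$, with $e(t) := x(t) - \tx(t)$, does not follow from~\eqref{eq:dASnegdef} alone: the hypothesis compares $V$ only with its value at time $0$, and subtracting two such one-sided inequalities goes the wrong way. Your restart via the semigroup property of $\phi$ is the right fix, but it silently uses that the left-shifted input $u(\cdot + s)$ again belongs to $\U$ (together with stationarity of~\eqref{eq:DynSys} and invariance of $\X$). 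This holds for the classes $\U = \LdepRp$ or $\LdpRp$ actually used in the paper, but not for an arbitrary $\U$, so it should either be stated as an assumption or the decrease condition should be read as holding between arbitrary pairs of times along the trajectories.

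Second, and more seriously, your class-$\K$ minorant $\alpha_3$ of $\rho$ is built on the interval $[0,r_0]$ with $r_0 = \alpha_1^{-1}(\alpha_2(\norm{e(0)}))$, which depends on the initial condition. Consequently $\sigma = \alpha_3 \circ \alpha_2^{-1}$ and the comparison function $\kappa$ also depend on $\norm{x_0 - \tx_0}$; your closing claim verifies that $\beta$ is independent of the input $u$ but not of the initial pair, whereas Definition~\ref{def:dAS} requires a single $\beta \in \KL$ for all $x_0,\tx_0 \in \X$. When $\X$ is unbounded (the global case $\X = X = \R^n$, which the paper invokes), no global class-$\K$ minorant of $\rho$ need exist at all — take $\rho(s) = s/(1+s^2)$, which decays at infinity, so any increasing minorant is identically zero. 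The standard repair uses \emph{both} sandwich bounds: $W(t) = w$ forces $\alpha_2^{-1}(w) \leq \norm{e(t)} \leq \alpha_1^{-1}(w)$, so one may set $\sigma(w) := \inf\left\{\rho(s) : \alpha_2^{-1}(w) \leq s \leq \alpha_1^{-1}(w)\right\}$, which is positive definite and independent of $u$, $x_0$ and $\tx_0$, and gives $D^+W(t) \leq -\sigma(W(t))$ directly; one then invokes the standard lemma (Lin--Sontag--Wang) that a Dini inequality $D^+W \leq -\sigma(W)$ with $\sigma$ merely continuous and positive definite — class $\K$ is not needed — yields $W(t) \leq \kappa(W(0),t)$ for a single $\kappa \in \KL$, after which your chaining through $\alpha_1^{-1}$ and $\alpha_2$ finishes the proof. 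Note finally that both your $\min$-based minorant and this repair tacitly require $\rho$ to be continuous (or at least lower semicontinuous), a property the paper's definition of positive definiteness omits and which should be assumed explicitly.
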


Hence, to assess incremental asymptotic stability, it suffices to find a function $V$ respecting the conditions in Theorem~\ref{th:dAS}. However, as in the previous section, the difficulty remains how to efficiently construct this function in the general case of nonlinear systems. The problem of finding efficient methods to assess incremental asymptotic stability is largely open. In the next section we shall propose a method to efficiently construct piecewise-quadratic incremental Lyapunov functions.

\section{Incremental analysis of piecewise-affine systems}
\label{se:AnalPWA}

In this section we shall focus on the incremental analysis of PWA systems. The augmented system~\eqref{eq:DynSysAug} may be represented as follows when the system is piecewise-affine.

\begin{equation}
\label{eq:PWAsysUYaug}
	\by = \bSpwa(\bu) \left\lbrace \begin{aligned} & \begin{aligned} \dbxt &= \bA_\ij\bxt + \bB_\ij\but \\ \byt &= \bC_\ij\bxt + \bD\but \end{aligned} & \text{for } \bxt\in X_\ij \\
		& \bx(0) = \bx_0
	\end{aligned}\right.
\end{equation}
where $\bx = \col(x,\tx,1)$, $\bu = \col(u,\tu)$, and
\begin{equation}
\label{eq:AugMat}
\begin{aligned}
	\bA_\ij &= \nmatrix{ccc}{A_{i} & 0 & a_{i} \\ 0 & A_{j} & a_{j} \\ 0 & 0 & 0} &
	\bB_\ij &= \nmatrix{cc}{B_{i} & 0 \\ 0 & B_{j} \\ 0 & 0}
	\\
	\bC_\ij &= \nmatrix{ccc}{C_{i} & -C_{j} & c_{i} - c_{j}} &
	\bD &= \nmatrix{cc}{D & -D}
\end{aligned}
\end{equation}

The space $\bX$ is defined as $\bX = X \times X \times \{1\}$, and regions $X_\ij$ are defined as $X_\ij = \{\bx \in \bX \mid x \in X_i \text{ and } \tx \in X_j\}$. Each region $X_\ij$ is described by $X_\ij = \{ \bx \in \bX \mid \bG_\ij \bx \succeq 0 \}$ where
\begin{equation}
 	\bG_\ij =  \nmatrix{ccc}{G_{i} & 0 & g_{i} \\ 0 & G_{j} & g_{j}}
\end{equation}

Analogously to the state partition $\{X_i\}_{i \in \I}$ of system $\Spwa$, the intersection between any two regions $X_\ij$ and $X_{kl}$ of $\bSpwa$ is either empty or contained in the hyperplane given by
\begin{equation}
\label{eq:AugSysHypP}
	X_\ij \cap X_{kl} \subseteq \left\{ \bx \in \bX \mid \bE_{ijkl} \bx = 0\right\}
\end{equation}

In the next section we establish sufficient conditions for \dLdg stability and incremental asymptotic stability. These problems will be cast as the search for continuous piecewise-quadratic functions of the form
\begin{equation}
\label{eq:PWQStorFunc}
	S(x,\tx) = \begin{cases}
 			(x - \tx)^TP_i(x - \tx) & \!\text{for } \bx \in X_{ii} \\
 			\bx^T\bP_\ij\bx & \!\text{for } \bx \in X_\ij, \, i \neq j
 		\end{cases}
\end{equation}

We remark that the choice of a quadratic function of $(x - \tx)$ on cells $X_{ii}$ does not lead to any loss of generality with respect to general piecewise-quadratic functions of $\bx$. Indeed, from Theorem~\ref{th:IncDiss}, if $x \in X$ is reachable from $\xo$, we have that $S(x,x) = 0$, which implies the aforementioned structure. This is formally stated in the following proposition.

\begin{proposition}
\label{prop:SXii}
	If $S: X \times X \rightarrow \R_+$ is a nonnegative piecewise-quadratic function given by $S(x,\tx) = \bx^T\bP_\ij\bx$ for $\bx \in X_\ij$ and such that $S(x,x) = 0$ for all $x \in X$, then on regions $X_{ii}$, $S$ must be of the form $S(x,\tx) = (x - \tx)^TP_i(x - \tx)$.
\end{proposition}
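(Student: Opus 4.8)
The plan is to work entirely inside a single diagonal cell $X_{ii}$, on which $S$ is the genuine quadratic $S(x,\tx) = \bx^T\bP_{ii}\bx$ with $\bx = \col(x,\tx,1)$. First I would write $\bP_{ii}$ in block form,
\begin{equation*}
	\bP_{ii} = \nmatrix{ccc}{P_{11} & P_{12} & p_1 \\ P_{12}^T & P_{22} & p_2 \\ p_1^T & p_2^T & p_0},
\end{equation*}
and expand
\begin{equation*}
	S(x,\tx) = x^TP_{11}x + 2x^TP_{12}\tx + \tx^TP_{22}\tx + 2p_1^Tx + 2p_2^T\tx + p_0,
\end{equation*}
so that the claim amounts to showing $P_{12} = -P_{11}$, $P_{22} = P_{11}$, $p_1 = p_2 = 0$ and $p_0 = 0$; the asserted form then follows with $P_i := P_{11}$.

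The first ingredient is the diagonal condition $S(x,x) = 0$. Since the regions $X_i$ have non-empty interior by assumption, $x \mapsto S(x,x)$ is a polynomial vanishing on the open set $\operatorname{int}(X_i)$, hence vanishing identically on it; collecting quadratic, linear and constant parts yields $P_{11}+P_{12}+P_{12}^T+P_{22} = 0$, $p_1+p_2 = 0$ and $p_0 = 0$. I would stress that these relations are \emph{not} sufficient on their own: they still permit nonzero cross and linear terms, so a second ingredient is indispensable.

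That second ingredient — and the step I expect to be the crux — is the nonnegativity of $S$. Fix any $x_0 \in \operatorname{int}(X_i)$; then $(x_0,x_0)$ lies in the interior of $X_{ii} = X_i \times X_i$ and is a minimizer of $S$ over this cell, because $S \geq 0 = S(x_0,x_0)$ there. As $S$ is smooth on $X_{ii}$ and the minimizer is interior, first-order optimality forces $\nabla S(x_0,x_0) = 0$, that is
\begin{align*}
	(P_{11}+P_{12})\,x_0 + p_1 &= 0, \\
	(P_{12}^T+P_{22})\,x_0 + p_2 &= 0.
\end{align*}
These affine identities hold for every $x_0$ in the open set $\operatorname{int}(X_i)$, so the linear coefficients and the constants must vanish separately, giving $P_{12} = -P_{11}$, $P_{12}^T+P_{22} = 0$ and $p_1 = p_2 = 0$. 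Using the symmetry $P_{11}^T = P_{11}$ (a diagonal block of the symmetric $\bP_{ii}$) gives $P_{22} = -P_{12}^T = P_{11}$, and $p_0 = 0$ as above.

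Substituting these values back reduces $\bP_{ii}$ to the block form with entries $P_{11}, -P_{11}, -P_{11}, P_{11}$ and vanishing last row and column, whence $S(x,\tx) = (x-\tx)^TP_{11}(x-\tx)$, proving the proposition. The only genuinely nontrivial observation is that the diagonal condition alone underdetermines $\bP_{ii}$ and that nonnegativity, exploited through the vanishing gradient at interior diagonal minimizers, is exactly what eliminates the linear and off-diagonal contributions; the remainder is the elementary fact that an affine or polynomial map vanishing on a set with non-empty interior vanishes everywhere.
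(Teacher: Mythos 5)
Your proof is correct, and in substance it follows the paper's strategy: write $\bP_{ii}$ in block form, use the vanishing of $S$ on the diagonal over a set with non-empty interior to fix part of the structure, and then exploit nonnegativity at interior diagonal points to kill the remaining cross and linear terms. The difference is in how the crux is executed. The paper, after substituting the diagonal relations, fixes $x$ and perturbs only $\tx$ along two opposite directions $x - \tx = \pm\alpha\xi$, lets the perturbation shrink so the sign is governed by the linear term, obtains $x^TQ\xi = 2q_{ii}^T\xi$ for all $\xi$ and all $x \in X_i$, and then needs a further step (the full-rank matrix $\Xi$, followed by the substitution $x = x_0 + \alpha\xi$ into $Qx = 2q_{ii}$) to conclude $Q = 0$ and $q_{ii} = 0$. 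You instead note that every $(x_0,x_0)$ with $x_0 \in \inte{X_i}$ is an interior global minimizer of the smooth quadratic $S$ on $X_{ii}$, so both partial gradients vanish, and the two resulting affine identities, holding on the open set $\inte{X_i}$, force all their coefficients to vanish simultaneously. This is the same mechanism --- the paper's two-sided limiting argument is essentially a hand-made proof of the first-order optimality condition --- but your route is shorter and dispenses with the paper's second half entirely, since differentiating in \emph{both} arguments yields $P_{12} = -P_{11}$, $P_{22} = P_{11}$ and $p_1 = p_2 = 0$ in one stroke. Your remark that the diagonal condition alone underdetermines $\bP_{ii}$ is apt (e.g.\ a skew $P_{12}$ with $P_{11} = P_{22} = 0$, or $p_1 = -p_2 \neq 0$, satisfies it while violating the claimed form), and both proofs rely on the identical hypotheses: symmetry of $\bP_{ii}$, nonnegativity of $S$, diagonal vanishing, and the standing assumption that the cells $X_i$ have non-empty interiors --- the one fact you might state explicitly is that $\inte{X_i} \times \inte{X_i}$ is the interior of $X_i \times X_i$ in $\R^{2n}$, which is what makes $(x_0,x_0)$ an interior point of the cell.
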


\begin{proof}
	Let us denote
	\begin{equation}
		\bP_\ij = \nmatrix{ccc}{P_\ij^{11} & P_\ij^{12} & q_\ij^{1}\Tstrut\Bstrut \\ \bullet & P_\ij^{22} & q_\ij^{2}\Tstrut\Bstrut \\ \bullet & \bullet & r_\ij\Tstrut\Bstrut}
	\end{equation}
	We have that $S(x,x) = 0$, for all $x \in X$. Then, in regions $X_{ii}$ we have
	\begin{equation}
	 	S(x,x) \!=\! x^T\!\left(P_{ii}^{11} \!+\! 2P_{ii}^{12} \!+\! P_{ii}^{22}\right)\!x + 2\left(q_{ii}^{1} \!+\! q_{ii}^{2}\right)\!x + r_{ii} \!=\! 0
	\end{equation}
	for all $x \in X_i$, which implies that
	\begin{align}
		P_{ii}^{12} &= -\frac{1}{2}\left(P_{ii}^{11} + P_{ii}^{22}\right) \\
		q_{ii}^{1} &= -q_{ii}^{2} =: q_{ii} \\
		r_{ii} &= 0
	\end{align}
	The function $S$ can then be rewritten as
	\begin{align}
		S(x,\tx) \!=\! (x \!-\! \tx)^TP_{ii}^{11}(x \!-\! \tx) - x^TQ(x \!-\! \tx) + 2q_{ii}^T(x \!-\! \tx)
	\end{align}	 
	for $\bx \in X_{ii}$, with $Q = P_{ii}^{22} - P_{ii}^{11}$. 
	
	Since regions $X_i$ have non-empty interiors, the vector given by $x - \tx$ for $x,\tx \in X_i$ can take any direction on $\R^n$. Based on this fact, we can take $x, \tx_1, \tx_2 \in X_i$ such that $x - \tx_1 = \alpha\xi$ and $x - \tx_2 = -\alpha\xi$, for $\alpha > 0$ and some $\xi \in \R^n$. We can then write
	\begin{equation}
	\begin{aligned}
	S(x,\tx_1) &= \alpha^2\xi^TP_{ii}^{11}\xi - \alpha(x^TQ\xi - 2q_{ii}^T\xi) \geq 0 \\
	S(x,\tx_2) &= \alpha^2\xi^TP_{ii}^{11}\xi + \alpha(x^TQ\xi - 2q_{ii}^T\xi) \geq 0
	\end{aligned}
	\end{equation}
	For $\norm{\xi} \rightarrow 0$, the quadratic term becomes negligible and the sign is dominated by the remaining term. Since $S$ is nonnegative and $\alpha > 0$, we must have $x^TQ\xi = 2q_{ii}^T\xi$, $\forall \xi \in \R^n$ and $\forall x \in X_i$. Since $\xi$ is an arbitrary vector in $\R^n$, there exists an $n \times n$ matrix $\Xi$ of full rank so that $(Qx - 2q_{ii})^T\xi = 0 $ implies $ (Qx - 2q_{ii})^T\Xi = 0$ and then %$\Xi = \nmatrix{ccc}{\xi_1 & \cdots & \xi_n}$
	\begin{equation}
	\label{eq:PosAuxEq}
	 	Qx = 2q_{ii}, \quad \forall x \in \inte{X_i}
	\end{equation}
	Every vector $x \in \inte{X_i}$ can be written as $x = x_0 + (x - x_0) = x_0 + \alpha{}\xi$, for some $x_0 \in X_i$, $\alpha >0$ and $\xi \in \R^n$. Substituting in~\eqref{eq:PosAuxEq}, we get $Qx_0 + \alpha Q\xi = 2q_{ii}$. Using~\eqref{eq:PosAuxEq} yields $\alpha Q\xi = 0$. Since $\xi$ can take any direction, this requires that the null space of $Q$ be of dimension $n$, which implies $Q = 0$. Therefore, $P_i := P_{ii}^{11} = P_{ii}^{22}$ and $q_{ii} = 0$, and function $S$ becomes
	\begin{equation}
	\label{eq:PWQPosDefXii}
	 	S(x,\tx) = (x - \tx)^TP_i (x - \tx) \quad \text{for } x,\tx \in X_i
	\end{equation}
	which concludes the proof.
\end{proof}

\begin{remark}
\label{rem:StatPart}
	The state space partition for the construction of the storage function $S$ in~\eqref{eq:PWQStorFunc} was chosen to be the same as $\{X_\ij\}_{(i,j) \in \I \times \I}$. However, in general, both are independent and the state space partition for the storage function may be refined to allow for more flexibility (see e.g.~\cite{Johansson1998}).
\end{remark}

\subsection{Piecewise quadratic incremental \Ldg stability}
\label{ss:PWQdL2S}

In this section we formulate conditions allowing to compute an upper bound on the \dLdg based on the search for a continuous piecewise-quadratic storage function~\eqref{eq:PWQStorFunc}.

\begin{theorem}
\label{th:PWAdL2S}
	If there exist symmetric matrices $P_{i} \in \R^{n \times n}$ and $\bP_{ij} \in \R^{(2n+1) \times (2n+1)}$; $U_{ij}, W_{ij} \in \R^{p_{ij} \times p_{ij}}$ with nonnegative coefficients and zero diagonal; and $L_{ijkl} \in \R^{(2n+1)\times 1}$ such that
	\begin{align}
		& \begin{cases}
			P_{i} \succeq 0 \\
			\nmatrix{cc}{A_i^TP_{i} + P_{i}A_i + C_i^TC_i & P_iB_i + C_i^TD \\ \bullet & D^TD - \eta^2I_p} \preceq 0
		\end{cases} & &\text{for } i \in \I \label{eq:ContThm1_dL2S}\\
		& \begin{cases}
			\bP_{ij} - \bG_{ij}^TU_{ij}\bG_{ij} \succeq 0 \\
			\nmatrix{cc}{
				\bA_\ij^T\overline{P}_\ij + \overline{P}_\ij \bA_\ij + \bC_\ij^T\bC_\ij + \bG_{ij}^TW_{ij}\bG_{ij} & 
				\bP_\ij\bB_\ij + \bC_\ij^T\bD \\
				\bullet &
				\bD^T\bD - \eta^2\overline{I}_p} \preceq 0
		\end{cases}  & &\begin{aligned}\text{for } (i,j) \in \I \times \I, \\ i \neq j\end{aligned} \label{eq:ContThm2_dL2S} \\
		& \bP_{ij} = \bP_{kl} + L_{ijkl}\bE_{ijkl} + \bE_{ijkl}^TL_{ijkl}^T & & \begin{aligned}\text{for } (i,j),(k,l), \\ X_\ij \cap X_{kl} \neq \varnothing\end{aligned} \label{eq:ContThm3}
	\end{align}
	 are satisfied, then the piecewise-affine system~\eqref{eq:PWAsysUY} is incrementally \Ldg stable and has an incremental \Ldg less than or equal to $\eta$. Furthermore, it is incrementally dissipative with storage function $S$ given by~\eqref{eq:PWQStorFunc} and supply rate given by~\eqref{eq:IncSupRate}.
\end{theorem}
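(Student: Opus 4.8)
The plan is to invoke the dissipativity characterization of Theorem~\ref{th:IncDiss}. By that result, to establish that $\Spwa$ is incrementally \Ldg stable with gain at most $\eta$, it suffices to exhibit a nonnegative storage function $S: X \times X \rightarrow \R_+$ satisfying $S(x,x)=0$ for all $x \in X$ and obeying the dissipation inequality~\eqref{eq:DI} for the augmented system $\bSpwa$ with respect to the incremental supply rate $\bw = \eta^2\norm{u-\tu}^2 - \norm{\by}^2$ from~\eqref{eq:IncSupRate}. I would take $S$ to be the candidate piecewise-quadratic function~\eqref{eq:PWQStorFunc} and show that the three families of LMIs~\eqref{eq:ContThm1_dL2S}--\eqref{eq:ContThm3} deliver exactly the three properties $S$ must possess: nonnegativity, the dissipation inequality region-by-region, and continuity across cell boundaries.

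First I would treat nonnegativity. On the diagonal cells $X_{ii}$ the structure $S(x,\tx)=(x-\tx)^TP_i(x-\tx)$ together with $P_i \succeq 0$ (the first line of~\eqref{eq:ContThm1_dL2S}) gives $S \geq 0$ immediately, and automatically $S(x,x)=0$; this form is legitimate by Proposition~\ref{prop:SXii}. On the off-diagonal cells $X_\ij$ the relevant constraint is the first line of~\eqref{eq:ContThm2_dL2S}, namely $\bP_\ij - \bG_\ij^T U_\ij \bG_\ij \succeq 0$. Here the \Sproc enters: for $\bx \in X_\ij$ we have $\bG_\ij\bx \succeq 0$, and since $U_\ij$ has nonnegative entries and zero diagonal, the scalar $\bx^T\bG_\ij^T U_\ij \bG_\ij \bx$ is a nonnegative combination of the products $(\bG_\ij\bx)_r(\bG_\ij\bx)_s$ and hence nonnegative on the cell; multiplying the LMI on both sides by $\bx$ then yields $S(x,\tx)=\bx^T\bP_\ij\bx \geq \bx^T\bG_\ij^T U_\ij \bG_\ij \bx \geq 0$ for $\bx \in X_\ij$.

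Next I would establish the dissipation inequality along trajectories. Fix a cell and differentiate $S$ along the augmented dynamics~\eqref{eq:PWAsysUYaug}; the goal is the infinitesimal inequality $\dot S \leq \bw$, i.e. $\dot S + \norm{\by}^2 - \eta^2\norm{u-\tu}^2 \leq 0$. Writing this out with $\by = \bC_\ij\bx + \bD\bu$ produces, on the off-diagonal cells, precisely the quadratic form in $\col(\bx,\bu)$ whose matrix is the second block of~\eqref{eq:ContThm2_dL2S}, up to the added term $\bG_\ij^T W_\ij \bG_\ij$; again by the \Sproc this extra term only strengthens the inequality on $X_\ij$ (where $\bG_\ij\bx \succeq 0$), so the LMI being $\preceq 0$ forces $\dot S + \norm{\by}^2 - \eta^2\norm{u-\tu}^2 \leq 0$ there. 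The diagonal cells are handled identically using the second line of~\eqref{eq:ContThm1_dL2S}, where the homogeneous structure in $(x-\tx)$ means no \Sproc slack is needed. Integrating $\dot S \leq \bw$ over the time spent in each visited cell and summing recovers the integral dissipation inequality~\eqref{eq:DI}.

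The main obstacle, and the step I would spend most care on, is making the trajectory-level integration rigorous across switches. The piecewise-quadratic $S$ is differentiable inside each cell but only continuous at the boundaries, so $\dot S$ is defined only almost everywhere; the summation over cells telescopes to $S(\bx(t_1))-S(\bx(t_0))$ precisely because the continuity constraint~\eqref{eq:ContThm3} guarantees the two quadratic pieces agree on the shared hyperplane $\{\bx : \bE_{ijkl}\bx = 0\}$. Concretely, on that hyperplane $\bx^T(\bP_\ij - \bP_{kl})\bx = \bx^T(L_{ijkl}\bE_{ijkl} + \bE_{ijkl}^T L_{ijkl}^T)\bx = 2(\bE_{ijkl}\bx)^T L_{ijkl}^T\bx = 0$, so the boundary terms cancel and no spurious jumps in $S$ are introduced. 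Assumption~\ref{ass:NoSlidMod} (no Zeno behavior) is what licenses this finite decomposition into cell-wise smooth arcs, ensuring only finitely many switches occur on any bounded interval. Assembling nonnegativity, the per-cell dissipation inequality, continuity, and the vanishing of $S$ on the diagonal, Theorem~\ref{th:IncDiss} then yields incremental \Ldg stability with gain at most $\eta$ and certifies incremental dissipativity with the claimed $S$ and supply rate.
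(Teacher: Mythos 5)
Your proposal is correct and follows essentially the same route as the paper's own proof: continuity from the matrix equality~\eqref{eq:ContThm3} on the boundary hyperplanes, nonnegativity from the first lines of~\eqref{eq:ContThm1_dL2S}--\eqref{eq:ContThm2_dL2S} via the \Sproc, the per-cell infinitesimal dissipation inequality from the second lines, and a telescoping integration over visited cells before invoking Theorem~\ref{th:IncDiss}. The only differences are cosmetic: you make explicit a few points the paper leaves implicit, namely that $S(x,x)=0$ holds automatically on the diagonal cells, that Proposition~\ref{prop:SXii} legitimizes the restricted structure there, and that Assumption~\ref{ass:NoSlidMod} licenses the finite decomposition of the time interval into cell-wise smooth arcs.
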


\begin{proof}
	According to Theorem~\ref{th:IncDiss}, the \dLdg of~\eqref{eq:PWAsysUY} is less than or equal to $\eta$ if the augmented system~\eqref{eq:PWAsysUYaug} is dissipative with respect to the supply rate~\eqref{eq:IncSupRate}. We will show that the LMIs~\eqref{eq:ContThm1_dL2S},~\eqref{eq:ContThm2_dL2S} and the matrix equality~\eqref{eq:ContThm3} allow the construction of a continuous nonnegative piecewise-quadratic storage function $S$ of structure given by~\eqref{eq:PWQStorFunc} such that the above condition is met.
	
	\noindent\emph{Continuity} -
	We first show that $S$ is a continuous function of $\bx$. This is clearly the case inside every cell, so we just need to show continuity on the boundaries. From~\eqref{eq:AugSysHypP}, $\bE_{ijkl}\bx = 0$ for all $\bx \in X_\ij \cap X_{kl}$, then~\eqref{eq:ContThm3} implies that $ \bx^T\bP_\ij\bx = \bx^T\bP_{kl}\bx$ for $\bx \in X_\ij \cap X_{kl}$ and hence that $S$ is continuous.
	
	\noindent\emph{Nonnegativity} -
	We now show that $S$ is a nonnegative function. The first inequality in~\eqref{eq:ContThm2_dL2S}, post and pre multiplied respectively by $\bx$ and $\bx^T$, implies that $\bx^T\bP_\ij\bx \geq \bx^T\bG_{ij}^TU_{ij}\bG_{ij}\bx$. Since $U_{ij}$ is composed of nonnegative coefficients, the right-hand side of the previous inequality is nonnegative whenever $\bx \in X_\ij$. This implies that
	\begin{equation}
	\label{eq:PWQPosDefXij_dL2S}
	 	\bx^T\bP_\ij\bx  \geq 0 \qquad \text{for } \bx \in X_{ij}
	\end{equation}
	
	The first inequality in~\eqref{eq:ContThm1_dL2S} implies that $S(x,\tx) \geq 0$ for all $\bx \in X_{ii}$. With~\eqref{eq:PWQPosDefXij_dL2S}, this guarantees that	
	\begin{equation}
	\label{eq:PWQPosDef_dL2S}
		S(x,\tx) \geq 0, \quad\forall x,\tx \in X
	\end{equation}
	
	\noindent\emph{Dissipation inequality} -
	We now show that the storage function respects the dissipation constraint~\eqref{eq:DI}. Using the same arguments as before, the last inequality in~\eqref{eq:ContThm2_dL2S}, post and pre multiplied by $\col(\bx, \bu)^T$ and $\col(\bx, \bu)$, implies that
	\begin{equation}
		\bx^T\bP_\ij(\bA_\ij\bx + \bB_\ij\bu)  + (\bA_\ij\bx + \bB_\ij\bu)^T\bP_\ij\bx \;+(\bC_\ij\bx + \bD\bu)^T(\bC_\ij\bx + \bD\bu) - \eta^2\bu^T\overline{I}_p\bu \leq 0
	\end{equation}
	for all $\bu \in U \times U$ and all $\bx \in X_\ij$. Let $t_a$ and $t_b$ be two time instants such that the state trajectory of system~\eqref{eq:PWAsysUYaug} remains in $\Xij$ on the interval $[t_a,t_b]$. By noticing that $\dbx = \bA_\ij\bx + \bB_\ij\bu$, and integrating from $t_a$ to $t_b$ along a trajectory of~\eqref{eq:PWAsysUYaug}, we have
	\begin{equation}
	\label{eq:intSij_dL2S}
		\!\!\bx(t_b)^T\bP_\ij\bx(t_b) - \bx(t_a)^T\bP_\ij\bx(t_a) +  \!\!\int_{t_a}^{t_b}\! \norm{\ytau \!-\! \tytau}^2d\tau - \eta^2\!\!\int_{t_a}^{t_b}\! \norm{\utau \!-\! \tutau}^2\,d\tau \leq 0
	\end{equation}
	
	The same reasoning can be applied to the last inequality in~\eqref{eq:ContThm1_dL2S}, post and pre multiplying by $\col(x - \tx,u - \tu)^T$ and $\col(x - \tx,u - \tu)$, which yields 
	\begin{equation}
	\label{eq:intSii_dL2S}
		\Delta x(t_b)^TP_i\Delta x(t_b) - \Delta x(t_a)^TP_i\Delta x(t_a) + \int_{t_a}^{t_b}\! \norm{\ytau \!-\! \tytau}^2d\tau - \eta^2\!\!\int_{t_a}^{t_b}\! \norm{\utau \!-\! \tutau}^2\,d\tau \leq 0
	\end{equation}
with $\Delta x(t) := \xt - \txt$. We note that the first terms in~\eqref{eq:intSij_dL2S} and~\eqref{eq:intSii_dL2S} represent the storage function~\eqref{eq:PWQStorFunc}. Let us consider a trajectory $\bx(t)$, $\forall t \in [t_0,t_1]$, with $t_0 \geq 0$. The time $t_1$ can be decomposed as $t_1 = t_1 - t_{in,q} + \sum_{k=0}^{q-1} (t_{out,k} - t_{in,k})$, with $t_{out,k} = t_{in,k+1}$ and $t_{in,0} = t_0$, so that during each time interval $[t_{in,k},t_{out,k}]$ the trajectory stays in a given region. Then, replacing $t_a$ by $t_{in,k}$ and $t_b$ by $t_{out,k}$ in~\eqref{eq:intSij_dL2S} and~\eqref{eq:intSii_dL2S}, adding up to $q$ for every region $X_\ij$ traversed, and using the continuity of $S$ yields
	\begin{equation}
	\label{eq:intS_dL2S}
		S(x(t_1),\tx(t_1)) - S(x(t_0),\tx(t_0)) + \int_{t_0}^{t_1}\! \norm{\ytau \!-\! \tytau}^2d\tau - \eta^2\!\!\int_{t_0}^{t_1}\! \norm{\utau \!-\! \tutau}^2\,d\tau \leq 0
	\end{equation}
	 
	From~\eqref{eq:DI}, this shows that $S$ is a storage function such that the augmented system $\bSpwa$ is dissipative with respect to the supply rate~\eqref{eq:IncSupRate}. Theorem~\ref{th:IncDiss} thus implies that $\Spwa$ has an \dLdg less than or equal to $\eta$, which concludes the proof.
\end{proof}

\begin{remark}
\label{rem:Remarks}
~\vspace{-2em}
	\begin{enumerate}[(i)]
		\item Conditions~\eqref{eq:ContThm1_dL2S} and~\eqref{eq:ContThm2_dL2S} are non-strict LMIs that, alongside the matrix equality~\eqref{eq:ContThm3}, can be efficiently handled by semi-definite programming solvers. \label{rem:LMIeff}
		\item It can be shown that condition~\eqref{eq:ContThm3} is also necessary for continuity. \label{rem:ContNec}
		\item The three terms $\bG_{ij}^TU_\ij\bG_{ij}$, $\bG_{ij}^TR_\ij\bG_{ij}$ and $\bG_{ij}^TW_\ij\bG_{ij}$ in~\eqref{eq:ContThm2_dL2S},~\eqref{eq:ContThm2_dAS} and~\eqref{eq:ContThm2_dL2S+dAS} come from the application of the \Sproc~\cite{Johansson1998}. Since the \Sproc is not lossless in general, some conservativeness is obtained. We may potentially reduce these effects by considering also other \Sproc terms, e.g. as in~\cite{Hassibi1998}. \label{rem:Sproc}
	\end{enumerate}
\end{remark}

\subsection{Piecewise quadratic incremental exponential stability}
\label{ss:PWQdES}

We provide in this section sufficient conditions for incremental asymptotic stability of PWA systems. We recall that, according to Definition~\ref{def:dAS}, this property is concerned with the convergence of any pair of trajectories driven by the same input $u \in \U$. In this sense, using the fact that $u = \tu$, the augmented system~\eqref{eq:PWAsysUYaug} may be rewritten as

\begin{equation}
\label{eq:PWAsysUYaugAS}
	\by = \bSpwa(u) \left\lbrace \begin{aligned} & \begin{aligned} \dbxt &= \bA_\ij\bxt + \bF_\ij\ut \\ \byt &= \bC_\ij\bxt \end{aligned} & \text{for } \bxt\in X_\ij \\
		& \bx(0) = \bx_0
	\end{aligned}\right.
\end{equation}
with $\bF_\ij$ given by
\begin{equation}
	\bF_\ij = \nmatrix{c}{B_i \\ B_j \\ 0}
\end{equation}

We are now able to state the following theorem.

\begin{theorem}
\label{th:PWAdAS}
	If there exist symmetric matrices $P_{i} \in \R^{n \times n}$ and $\bP_{ij} \in \R^{(2n+1) \times (2n+1)}$; $U_{ij}$, $R_{ij}$, $W_{ij} \in \R^{p_{ij} \times p_{ij}}$ with nonnegative coefficients and zero diagonal;  $L_{ik} \in \R^{(2n+1)\times 1}$ and $\sigma_1, \sigma_2, \sigma_3$ strictly positive such that
	\begin{align}
			& \begin{cases}
				P_{i} - \sigma_1I_n \succeq 0 \\
				P_{i} - \sigma_2I_n \preceq 0 \\
				A_i^TP_{i} + P_{i}A_i + \sigma_3I_n \preceq 0
			\end{cases} & &\text{for } i \in \I \label{eq:ContThm1_dAS}
			\\
			& \begin{cases}
				\bP_{ij} - \sigma_1\bJ_n - \bG_{ij}^TU_{ij}\bG_{ij} \succeq 0 \\
				\bP_{ij} - \sigma_2\bJ_n + \bG_{ij}^TR_{ij}\bG_{ij} \preceq 0 \\
				\bA_\ij^T\overline{P}_\ij + \overline{P}_\ij \bA_\ij + \sigma_3\bJ_n + \bG_{ij}^TW_{ij}\bG_{ij} \preceq 0 \\
				\bP_\ij\bF_\ij = 0
			\end{cases} & &\text{for } (i,j) \in \I \times \I,\, i \neq j\label{eq:ContThm2_dAS}
	\end{align}
	 and~\eqref{eq:ContThm3} are satisfied, then the piecewise-affine system~\eqref{eq:PWAsysUY} is incrementally exponentially stable on $X$ with respect to $\LdepRp$.
\end{theorem}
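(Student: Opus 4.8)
The plan is to proceed in close analogy with the proof of Theorem~\ref{th:PWAdL2S}, using the piecewise-quadratic function with the structure~\eqref{eq:PWQStorFunc} as a candidate incremental Lyapunov function $V$ and verifying the hypotheses of Theorem~\ref{th:dAS}. Continuity of $V$ across the cell boundaries follows verbatim from the argument there: since $\bE_{ijkl}\bx = 0$ on $X_\ij \cap X_{kl}$, the matrix equality~\eqref{eq:ContThm3} forces $\bx^T\bP_{ij}\bx = \bx^T\bP_{kl}\bx$ on the shared hyperplane, so $V$ is continuous. Only the two quantitative properties of Theorem~\ref{th:dAS} then remain.

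For the sandwich bounds~\eqref{eq:dASnorm}, I would pre- and post-multiply the first two inequalities of~\eqref{eq:ContThm2_dAS} by $\bx^T$ and $\bx$. Noting that $\bx^T\bJ_n\bx = \norm{x - \tx}^2$ and that $U_{ij}, R_{ij}$ have nonnegative coefficients, so that $\bx^T\bG_{ij}^TU_{ij}\bG_{ij}\bx \geq 0$ and $\bx^T\bG_{ij}^TR_{ij}\bG_{ij}\bx \geq 0$ whenever $\bx \in X_\ij$ (where $\bG_{ij}\bx \succeq 0$), these yield $\sigma_1\norm{x - \tx}^2 \leq \bx^T\bP_{ij}\bx \leq \sigma_2\norm{x - \tx}^2$ on $X_\ij$; the first two inequalities of~\eqref{eq:ContThm1_dAS} give the same chain on the diagonal cells $X_{ii}$. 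Hence~\eqref{eq:dASnorm} holds with the $\Koo$ functions $\alpha_1(r) = \sigma_1 r^2$ and $\alpha_2(r) = \sigma_2 r^2$.

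The decrease condition~\eqref{eq:dASnegdef} is the crux. Along the input-synchronized augmented dynamics~\eqref{eq:PWAsysUYaugAS}, where $\dbx = \bA_\ij\bx + \bF_\ij u$, the time derivative of $V$ inside a cell $X_\ij$ reads $\dot V = \bx^T(\bA_\ij^T\bP_{ij} + \bP_{ij}\bA_\ij)\bx + 2\bx^T\bP_{ij}\bF_\ij u$. The role of the equality constraint $\bP_{ij}\bF_\ij = 0$ in~\eqref{eq:ContThm2_dAS} is precisely to annihilate the input term, leaving $\dot V = \bx^T(\bA_\ij^T\bP_{ij} + \bP_{ij}\bA_\ij)\bx$. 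Pre- and post-multiplying the third inequality of~\eqref{eq:ContThm2_dAS} by $\bx^T$ and $\bx$, again discarding the nonnegative \Sproc term on $X_\ij$, then gives $\dot V \leq -\sigma_3\norm{x - \tx}^2$. On a diagonal cell $X_{ii}$ the cancellation is automatic, since the affine and input terms $a_i$ and $B_i u$ are common to both copies, so $\Delta x := x - \tx$ obeys $\dot{\Delta x} = A_i\Delta x$, and the third inequality of~\eqref{eq:ContThm1_dAS} yields $\dot V = \Delta x^T(A_i^TP_i + P_iA_i)\Delta x \leq -\sigma_3\norm{\Delta x}^2$. Integrating this bound over each maximal sub-interval on which the trajectory stays in one region and summing, exactly as in~\eqref{eq:intSij_dL2S}--\eqref{eq:intS_dL2S} using the continuity of $V$, produces~\eqref{eq:dASnegdef} with the positive definite function $\rho(r) = \sigma_3 r^2$; Theorem~\ref{th:dAS} then gives incremental asymptotic stability.

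Finally, to upgrade this to the claimed \emph{exponential} estimate, I would combine the two ingredients quantitatively: from $\dot V \leq -\sigma_3\norm{x - \tx}^2 \leq -(\sigma_3/\sigma_2)V$ together with the continuity of $t \mapsto V(x(t),\tx(t))$, a comparison argument applied cell-by-cell shows $V(x(t),\tx(t)) \leq e^{-(\sigma_3/\sigma_2)t}V(x_0,\tx_0)$. Sandwiching with~\eqref{eq:dASnorm} gives $\norm{x(t) - \tx(t)} \leq \sqrt{\sigma_2/\sigma_1}\,e^{-(\sigma_3/2\sigma_2)t}\norm{x_0 - \tx_0}$, i.e.\ a bound of the form $\beta(r,t) \leq d e^{-\lambda t}r$ with $d = \sqrt{\sigma_2/\sigma_1}$ and $\lambda = \sigma_3/(2\sigma_2)$, establishing incremental exponential stability on $X$ with respect to $\LdepRp$. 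The main obstacle I anticipate is the careful bookkeeping across switching surfaces: one must ensure that the non-smoothness of $V$ at the cell boundaries does not break the comparison argument, which is handled by the continuity guaranteed by~\eqref{eq:ContThm3} together with the absence of Zeno behavior (Assumption~\ref{ass:NoSlidMod}), so that the piecewise estimates chain together cleanly.
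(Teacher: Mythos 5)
Your proposal is correct and follows essentially the same route as the paper's proof: the same continuity argument via~\eqref{eq:ContThm3}, the same quadratic sandwich bounds $\sigma_1\norm{x-\tx}^2 \leq V \leq \sigma_2\norm{x-\tx}^2$ from the first two inequalities, the same use of $\bP_\ij\bF_\ij = 0$ to annihilate the input term in $\dot V$, and the same piecewise comparison argument chained across cells by continuity, yielding the identical bound with $d = \sqrt{\sigma_2/\sigma_1}$ and $\lambda = \sigma_3/(2\sigma_2)$. The only cosmetic difference is that you first invoke Theorem~\ref{th:dAS} to obtain asymptotic stability before upgrading to the exponential estimate, whereas the paper derives the exponential decay directly from the differential inequality $\dot V + (\sigma_3/\sigma_2)V \leq 0$.
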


\begin{proof}
	We shall demonstrate that the above conditions allow us to build a continuous piecewise-quadratic incremental Lyapunov function $V$, given by the same structure as $S$ in~\eqref{eq:PWQStorFunc}, which is shown to be bounded by quadratic functions and to decrease exponentially. This allows us to prove incremental exponential stability of~\eqref{eq:PWAsysUY}.
	
	\noindent\emph{Continuity} -
	Follows exactly as in Theorem~\ref{th:PWAdL2S}.
	
	\noindent\emph{Norm bounds} -
	The first inequality in~\eqref{eq:ContThm2_dAS}, post and pre multiplied respectively by $\bx$ and $\bx^T$, implies that $\bx^T\bP_\ij\bx - \sigma_1\norm{x - \tx}^2 \geq \bx^T\bG_{ij}^TU_{ij}\bG_{ij}\bx$. Since $U_{ij}$ is composed of nonnegative coefficients, the right-hand side of the previous inequality is nonnegative whenever $\bx \in X_\ij$. This implies that
	\begin{equation}
	\label{eq:PWQPosDefXij_dAS}
	 	\bx^T\bP_\ij\bx  \geq \sigma_1\norm{x - \tx}^2 \qquad \text{for } \bx \in X_{ij}
	\end{equation}
	
	The first inequality in~\eqref{eq:ContThm1_dAS} implies that $V(x,\tx) \geq \sigma_1\norm{x - \tx}^2$ for all $\bx \in X_{ii}$. With~\eqref{eq:PWQPosDefXij_dAS}, this guarantees that	
	\begin{equation}
	\label{eq:PWQPosDef_dAS}
		V(x,\tx) \geq \sigma_1\norm{x - \tx}^2, \quad\forall x,\tx \in X
	\end{equation}
	
	Proceeding exactly as before, the second inequalities in~\eqref{eq:ContThm1_dAS} and~\eqref{eq:ContThm2_dAS} imply that
	\begin{equation}
	\label{eq:PWQUpBound_dAS}
		V(x,\tx) \leq \sigma_2\norm{x - \tx}^2, \quad\forall x,\tx \in X
	\end{equation}
	Inequalities~\eqref{eq:PWQPosDef_dAS} and~\eqref{eq:PWQUpBound_dAS} imply that the continuous piecewise-quadratic function $V$ is such that
	\begin{equation}
	\label{eq:PWQBounds_dAS}
	 	\sigma_1\norm{x - \tx}^2 \leq V(x,\tx) \leq \sigma_2\norm{x - \tx}^2
	\end{equation}
		
	\noindent\emph{Exponential decay} -
	We now show that the function $V$ decays exponentially and conclude on the incremental exponential stability. Using the same arguments as before, the third inequality in~\eqref{eq:ContThm2_dAS}, post and pre multiplied by $\bx^T$ and $\bx$, implies that
	\begin{equation}
		\bx^T\bP_\ij\bA_\ij\bx + \bx^T\bA_\ij^T\bP_\ij\bx \leq -\sigma_3\norm{x - \tx}^2
	\end{equation}
	for all $\bx \in X_\ij$. From the equality in~\eqref{eq:ContThm2_dAS}, and using the fact that $\dbx = \bA_\ij\bx + \bF_\ij u$, we may write
	\begin{equation}
		\bx^T\bP_\ij\dbx + \dbx^T\bP_\ij\bx \leq -\sigma_3\norm{x - \tx}^2
	\end{equation}
	In the interior of each region $\Xij$, $V$ is differentiable and such that $\dV(x,\tx)$ is equal to the left-hand side of the previous inequality. Using this fact together with~\eqref{eq:PWQUpBound_dAS}, we obtain
	\begin{equation}
		\dV(x,\tx) + \frac{\sigma_3}{\sigma_2}V(x,\tx) \leq 0
	\end{equation}
	Let $t_a$ and $t_b$ be two time instants such that the state trajectory of system~\eqref{eq:PWAsysUYaugAS} remains in $\Xij$ on the interval $[t_a,t_b]$. By noticing that $\dbx = \bA_\ij\bx + \bF_\ij u$, and integrating from $t_a$ to $t_b$ along trajectories of~\eqref{eq:PWAsysUYaugAS}, we have
	\begin{equation}
	\label{eq:intSij_dAS}
		V(x(t_b),\tx(t_b))e^{(\sigma_3/\sigma_2)t_b} - V(x(t_a),\tx(t_a))e^{(\sigma_3/\sigma_2)t_a} \leq 0
	\end{equation}
	
	The same reasoning can be applied to the last inequality in~\eqref{eq:ContThm1_dAS}. Let us consider two trajectories $\xt = \phi(t,0,\xo,u)$ and $\txt = \phi(t,0,\txo,u)$, for $u \in \LdepRp$. The time $t$ can be decomposed as $t = t - t_{in,q} + \sum_{k=0}^{q-1} (t_{out,k} - t_{in,k})$, with $t_{out,k} = t_{in,k+1}$ and $t_{in,0} = 0$, so that during each time interval $[t_{in,k},t_{out,k}]$ the trajectory stays in a given region. Then, replacing $t_a$ by $t_{in,k}$ and $t_b$ by $t_{out,k}$ in~\eqref{eq:intSij_dAS}, adding up to $q$ for every region $X_\ij$ crossed, and using continuity yields
	\begin{equation}
	\label{eq:intS_dAS}
		V(\xt,\txt)e^{(\sigma_3/\sigma_2)t} - V(\xo,\txo) \leq 0
	\end{equation}
	and then, from~\eqref{eq:PWQBounds_dAS},
	\begin{equation}
	 	\norm{\xt - \txt} \leq \sqrt{\frac{\sigma_2}{\sigma_1}}e^{-(\sigma_3/2\sigma_2)t}\norm{x_0 - \tx_0}
	\end{equation}
	 which concludes the proof.
\end{proof}

\begin{remark}
	We remark that the ratio $\sigma_3/2\sigma_2$ appearing in the previous proof constitutes an upper bound on the incremental exponential decay rate. This might be taken into account when solving the LMIs in Theorem~\ref{th:PWAdAS}, by maximizing $\sigma_3$ or minimizing $\sigma_2$, in order to compute a less conservative bound. Evidently, it would be possible to compute a direct bound on the decay rate, as it is standard procedure on Lyapunov analysis (see e.g.~\cite{Boyd1994}).
\end{remark}

\begin{remark}
	The function $V$ constructed in Theorem~\ref{th:PWAdAS} possesses stronger properties than the function in Theorem~\ref{th:dAS} due to the quadratic bounds provided by the $\sigma_1, \sigma_2$ and $\sigma_3$ terms, hence allowing us to derive incremental exponential stability, while asymptotic stability is obtained in~\cite{Angeli2002}.
\end{remark}

\subsection{A single theorem for characterizing both input-output and internal incremental stability}
\label{ss:PWQdL2S+dAS}

This section presents a third theorem ensuring both incremental \Ldg and incremental exponential stability of PWA systems. 

\begin{theorem}
\label{th:PWAdL2S+dAS}
	If there exist symmetric matrices $P_{i} \in \R^{n \times n}$ and $\bP_{ij} \in \R^{(2n+1) \times (2n+1)}$; $U_{ij}$, $R_{ij}$, $W_{ij} \in \R^{p_{ij} \times p_{ij}}$ with nonnegative coefficients and zero diagonal; $L_{ijkl} \in \R^{(2n+1)\times 1}$ and $\sigma_1, \sigma_2, \sigma_3 > 0$ such that
	\begin{align}
		& \begin{cases}
			P_{i} - \sigma_1I_n \succeq 0 \\
			P_{i} - \sigma_2I_n \preceq 0 \\
			\nmatrix{cc}{A_i^TP_{i} + P_{i}A_i + C_i^TC_i + \sigma_3I_n & P_iB_i + C_i^TD \\ \bullet & D^TD - \eta^2I_p} \preceq 0
		\end{cases} & &\text{for } i \in \I \label{eq:ContThm1_dL2S+dAS}\\
		& \begin{cases}
			\bP_{ij} - \sigma_1\bJ_n - \bG_{ij}^TU_{ij}\bG_{ij} \succeq 0 \\
			\bP_{ij} - \sigma_2\bJ_n + \bG_{ij}^TR_{ij}\bG_{ij} \preceq 0 \\
			\nmatrix{cc}{
				\left({ \begin{smallmatrix} \textstyle
      				\bA_\ij^T\overline{P}_\ij + \overline{P}_\ij \bA_\ij + \bC_\ij^T\bC_\ij + {} \\ \textstyle\sigma_3\bJ_n + \bG_{ij}^TW_{ij}\bG_{ij}
    			\end{smallmatrix}}\right) &
				\bP_\ij\bB_\ij + \bC_\ij^T\bD \\
				\bullet &
				\bD^T\bD - \eta^2\overline{I}_p} \preceq 0
		\end{cases}  & &\begin{aligned} \text{for } (i,j) &\in \I \times \I \\ i &\neq j \end{aligned}\label{eq:ContThm2_dL2S+dAS}
	\end{align}
	 and~\eqref{eq:ContThm3} are satisfied, then the piecewise-affine system~\eqref{eq:PWAsysUY} is incrementally \Ldg stable with incremental \Ldg less than or equal to $\eta$ and is incrementally asymptotically stable with respect to $\U = \LdpRp$. Furthermore, $S$ given as in~\eqref{eq:PWQStorFunc} is both a storage function for the augmented system~\eqref{eq:PWAsysUYaug} and an incremental Lyapunov function.
\end{theorem}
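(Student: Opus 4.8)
The plan is to exhibit a single continuous piecewise-quadratic function $S = V$ of the form~\eqref{eq:PWQStorFunc} which simultaneously serves as a storage function for the augmented system~\eqref{eq:PWAsysUYaug} and as an incremental Lyapunov function, by showing that the combined conditions subsume those of Theorems~\ref{th:PWAdL2S} and~\ref{th:PWAdAS}. I would treat the two claims separately. For the \dLdg bound, the key observation is that~\eqref{eq:ContThm1_dL2S+dAS} and~\eqref{eq:ContThm2_dL2S+dAS} imply the hypotheses of Theorem~\ref{th:PWAdL2S}: since $\sigma_1 > 0$ and $\bJ_n \succeq 0$, the first inequalities give $P_i \succeq \sigma_1 I_n \succeq 0$ and $\bP_{ij} - \bG_{ij}^TU_{ij}\bG_{ij} \succeq \sigma_1\bJ_n \succeq 0$; and since $\sigma_3 > 0$ with $I_n,\bJ_n \succeq 0$, deleting the terms $\sigma_3 I_n$ and $\sigma_3\bJ_n$ from the $(1,1)$-blocks only decreases the left-hand sides, so the dissipation LMIs~\eqref{eq:ContThm1_dL2S} and~\eqref{eq:ContThm2_dL2S} follow. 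Together with~\eqref{eq:ContThm3}, Theorem~\ref{th:PWAdL2S} then certifies incremental \Ldg stability with \dLdg at most $\eta$, and makes $S$ a storage function for the supply rate~\eqref{eq:IncSupRate}.

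For the internal stability claim, the norm bounds $\sigma_1\norm{x-\tx}^2 \le V(x,\tx) \le \sigma_2\norm{x-\tx}^2$ follow verbatim from the first two inequalities in~\eqref{eq:ContThm1_dL2S+dAS},~\eqref{eq:ContThm2_dL2S+dAS} and the \Sproc, exactly as in the proof of Theorem~\ref{th:PWAdAS}. The decay, however, cannot be imported directly, because the combined conditions omit the equality $\bP_{ij}\bF_{ij} = 0$ that Theorem~\ref{th:PWAdAS} used to discard the input contribution to $\dV$ on the off-diagonal cells. The resolution, which I expect to be the crux, is to exploit the \Ldg structure already present in the $(1,1)$-block. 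Setting $u = \tu$ reduces~\eqref{eq:PWAsysUYaug} to~\eqref{eq:PWAsysUYaugAS}, and evaluating the third inequality of~\eqref{eq:ContThm2_dL2S+dAS} at $\bu = \col(u,u)$ makes $\bD\bu = 0$ and $\bu^T\overline{I}_p\bu = \norm{u-\tu}^2 = 0$ vanish, so that the cross term $\bC_\ij^T\bD\bu$ drops out while $\bB_\ij\bu = \bF_\ij u$; the surviving input contribution $2u^T\bF_{ij}^T\bP_{ij}\bx$ is precisely the input part of $\dV$, and the \Sproc inequality therefore collapses to $\dV + \norm{y-\tilde y}^2 + \sigma_3\norm{x-\tx}^2 \le 0$ on every cell $X_\ij$. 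On the diagonal cells $X_{ii}$ one has $\dot{(x-\tx)} = A_i(x-\tx)$ and the same bound follows from the third inequality of~\eqref{eq:ContThm1_dL2S+dAS}. In both cases $\dV \le -\sigma_3\norm{x-\tx}^2 \le -(\sigma_3/\sigma_2)V$, so the \Ldg term $\bC_\ij^T\bC_\ij$ plays exactly the role that $\bP_{ij}\bF_{ij} = 0$ played before.

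It then remains to integrate this pointwise bound along a trajectory. Multiplying by $e^{(\sigma_3/\sigma_2)t}$ and integrating over each region in which the trajectory dwells, then summing across the finitely many switches and invoking the continuity of $V$ guaranteed by~\eqref{eq:ContThm3} (as in~\eqref{eq:intSij_dAS}--\eqref{eq:intS_dAS}), yields $V(\xt,\txt) \le e^{-(\sigma_3/\sigma_2)t}V(x_0,\tx_0)$ and hence $\norm{\xt-\txt} \le \sqrt{\sigma_2/\sigma_1}\,e^{-(\sigma_3/2\sigma_2)t}\norm{x_0-\tx_0}$ for every common input $u$; this is incremental exponential stability, which a fortiori gives incremental asymptotic stability with respect to $\U = \LdpRp$. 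Finally, the norm bounds together with the integrated inequality $V(\xt,\txt) - V(x_0,\tx_0) \le -\sigma_3\int_0^t\norm{x-\tx}^2\,d\tau$ certify, via $\alpha_i(r) = \sigma_i r^2 \in \Koo$ and $\rho(r) = \sigma_3 r^2$, that $S = V$ is simultaneously the storage function from the first part and an incremental Lyapunov function in the sense of Theorem~\ref{th:dAS}. The main obstacle is this off-diagonal decay step, where one must recognize that the \Ldg block supplies the cancellation of the input term previously enforced by the equality $\bP_{ij}\bF_{ij} = 0$.
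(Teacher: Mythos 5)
Your proposal is correct, and it follows the same overall strategy as the paper — reduce the combined LMIs to the hypotheses of Theorems~\ref{th:PWAdL2S} and~\ref{th:PWAdAS} — but it handles the one genuine crux, the absence of the equality $\bPij\bFij = 0$ from~\eqref{eq:ContThm2_dAS}, by a different argument. The paper shows that this equality is in fact \emph{implied} by the last inequality in~\eqref{eq:ContThm2_dL2S}: applying the change of variables~\eqref{eq:ChVarU-DU}, which splits the input space into the $u - \tu$ and $u + \tu$ directions, turns the lower right block into~\eqref{eq:LowRightBlock}, whose zero diagonal block forces the corresponding rows and columns of the whole matrix to vanish; reading this off the transformed upper right block~\eqref{eq:UppRightBlock} gives $\bPij\bFij = 0$, after which the paper can invoke the proofs of Theorems~\ref{th:PWAdL2S} and~\ref{th:PWAdAS} verbatim, all their hypotheses now being verified. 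You instead bypass the equality entirely: you evaluate the quadratic form of the third LMI in~\eqref{eq:ContThm2_dL2S+dAS} on the subspace $\bu = \col(u,u)$, where $\bD\bu$ and $\bu^T\overline{I}_p\bu$ both vanish, absorb the surviving cross term $2\bx^T\bPij\bFij u$ into $\dV$ along trajectories of~\eqref{eq:PWAsysUYaugAS}, and obtain the decay inequality $\dV \leq -\sigma_3\norm{x - \tx}^2$ directly. The two arguments are equivalent in substance — your pointwise inequality can only hold for all $u$ because feasibility forces $\bx^T\bPij\bFij = 0$ for all $\bx$, which is the paper's equality in disguise, so the role you attribute to the $\bC_\ij^T\bC_\ij$ block is really played by this implicit cancellation — but your version is somewhat more elementary (it needs no zero-row/column lemma for negative semidefinite matrices and no explicit structural identity), at the cost of redoing the decay estimate inline rather than deferring to Theorem~\ref{th:PWAdAS}. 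All your surrounding steps (dropping the $\sigma_3 I_n$ and $\sigma_3\bJ_n$ terms to recover~\eqref{eq:ContThm1_dL2S} and~\eqref{eq:ContThm2_dL2S}, the quadratic norm bounds via the \Sproc, and the integration across switches using the continuity guaranteed by~\eqref{eq:ContThm3}) coincide with the paper's.
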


\begin{proof}
	It is clear that feasibility of~\eqref{eq:ContThm1_dL2S+dAS} implies the feasibility of~\eqref{eq:ContThm1_dL2S} and~\eqref{eq:ContThm1_dAS}, and feasibility of~\eqref{eq:ContThm2_dL2S+dAS} implies the feasibility of~\eqref{eq:ContThm2_dL2S} and the inequalities in~\eqref{eq:ContThm2_dAS}. We note that the equality in~\eqref{eq:ContThm2_dAS} is also implied by the last inequality in~\eqref{eq:ContThm2_dL2S}. Indeed, using the change of variables
	\begin{equation}
	\label{eq:ChVarU-DU}
	 	\nmatrix{c}{u \\ \tu} = \frac{1}{2}\nmatrix{cc}{I_p & I_p \\ -I_p & I_p}\nmatrix{c}{u - \tu \\ u + \tu}
	\end{equation}
	the lower right block of~\eqref{eq:ContThm2_dL2S} becomes
	\begin{equation}
	\label{eq:LowRightBlock}
	 	\nmatrix{cc}{D^TD - \eta^2I_p & 0 \\ 0 & 0}
	\end{equation}
	and the upper right block becomes
	\begin{equation}
	\label{eq:UppRightBlock}
	 	\frac{1}{2}\bPij\nmatrix{cc}{B_i & B_i \\ -B_j & B_j \\ 0 & 0} + \bCij^T\nmatrix{cc}{D & 0}
	\end{equation}
	The presence of zero values at the diagonal in~\eqref{eq:LowRightBlock} implies that all values in the corresponding rows and columns must be zero for the LMI to be feasible. Hence, from~\eqref{eq:UppRightBlock} we infer that $\bPij\bFij = 0$. Thus, feasibility of the LMIs~\eqref{eq:ContThm1_dL2S+dAS},~\eqref{eq:ContThm2_dL2S+dAS} and the matrix equality~\eqref{eq:ContThm3} allows us to construct a function $S$ given by~\eqref{eq:PWQStorFunc}, which is both a storage function and an incremental Lyapunov function, proving incremental \Ldg stability and incremental exponential stability through the same arguments as in Theorems~\ref{th:PWAdL2S} and~\ref{th:PWAdAS}.
\end{proof}

\begin{remark}
	We note that inequalities~\eqref{eq:ContThm2_dL2S} and~\eqref{eq:ContThm2_dL2S+dAS} would not be feasible for systems with matrix $D$ depending on the regional partition. Indeed, suppose matrix $\bD$ of the augmented system~\eqref{eq:PWAsysUYaug} was replaced by
	\begin{equation}
	 	\bDij = \nmatrix{cc}{D_i & -D_j}
	\end{equation}
	In this way, the lower right block of~\eqref{eq:ContThm2_dL2S} and~\eqref{eq:ContThm2_dL2S+dAS} would become $\bDij^T\bDij - \eta^2\bI_p$. Using again the change of variables~\eqref{eq:ChVarU-DU} on this block yields the matrix
	\begin{equation}
	 	\frac{1}{4}\nmatrix{cc}{(D_i + D_j)^T(D_i + D_j) - \eta^2I_p & (D_i + D_j)^T(D_i - D_j) \\ (D_i - D_j)^T(D_i + D_j) & (D_i - D_j)^T(D_i - D_j)}
	\end{equation}
	The lower right diagonal block must be negative semidefinite for the inequalities to be feasible, and hence $D_i = D_j$.
\end{remark}

\begin{remark}
\label{rem:LowBound}
	Inequalities~\eqref{eq:ContThm1_dAS} and~\eqref{eq:ContThm1_dL2S+dAS} are not feasible for $A_i$ non Hurwitz, and hence Theorems~\ref{th:PWAdAS} and~\ref{th:PWAdL2S+dAS} require that each subsystem be asymptotically stable. Additionally, for Theorem~\ref{th:PWAdL2S+dAS},~\eqref{eq:ContThm1_dL2S+dAS} requires that the \Hoo norm of each subsystem be less than or equal to $\eta$.
\end{remark}

In this section we have proposed conditions that allow assessment of \dLdg and incremental asymptotic stability of PWA systems. Despite the fact that the LMIs in Theorems~\ref{th:PWAdL2S} and~\ref{th:PWAdAS} are quite similar, they are not equivalent. In Section~\ref{se:ObsReachCon}, it will be established how observability and reachability can be used to bridge the gap between both notions in the case of general nonlinear systems $\Sigma$. However, as pointed out in~\cite{Bemporad2000}, observability and reachability are complex properties for PWA systems, and cannot be simply inherited from the respective properties of the subsystems. Theorem~\ref{th:PWAdL2S+dAS} allows us to avoid this problem by establishing \dLdg stability and incremental asymptotic stability concurrently.

\section{Numerical examples}
\label{se:NumEx}

In this section we consider some examples of incremental analysis of PWA systems. These examples illustrate the use of the conditions obtained in the last section. Example~\ref{ex:Egg} shows how piecewise-quadratic storage functions allow us to compute an upper bound where quadratic functions fail. Example~\ref{ex:DeadZone} illustrates how the \dLdg may be significantly greater than the \Ldg of a system, and justifies the interest in addressing the former. Finally, in Example~\ref{ex:Sat}, the continuous piecewise-quadratic structure is highlighted through a contour plot of the storage function.

\begin{example}
\label{ex:Egg}
	Consider the PWA system given by~\eqref{eq:PWAsysUY} with
	\begin{equation}
	\begin{aligned}
		A_1 &= \nmatrix{cc}{-0.1 & 1\\ -5 & -0.1} & A_2 &= \nmatrix{cc}{-0.1 & 1\\ -1 & -0.1} 
	\end{aligned}
	\end{equation}
	$a_1 = a_2 = 0$, $B_1 = B_2 = \nmatrix{cc}{0 & 1}^T$, $C_1 = C_2 = \nmatrix{cc}{1 & 0}$, $c_1 = c_2 = 0$ and $D = 0$. The state space partition is illustrated in Fig.~\ref{fig:EggTraj}, along with a sample trajectory for $u \equiv 0$.
		\begin{figure}[tb]
		\centering
		\begin{subfigure}[t]{0.4\textwidth}
			\centering
			\includegraphics{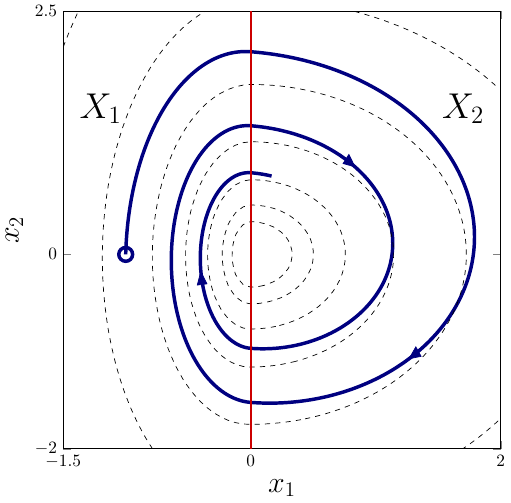}	
			\caption{A sample trajectory for $u \equiv 0$ illustrating the state space partitioning}
			\label{fig:EggTraj}
		\end{subfigure}
		\quad
		\begin{subfigure}[t]{0.44\textwidth}
			\centering
			\includegraphics{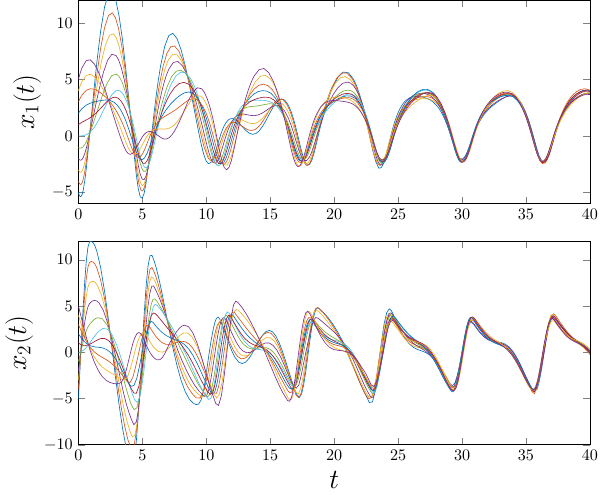}	
			\caption{Response for a sinusoidal input and different initial conditions}
			\label{fig:EggSync}
		\end{subfigure}
		\caption{Trajectories of the system in Example~\ref{ex:Egg}}
	\end{figure}
	Fig.~\ref{fig:EggSync} presents some trajectories of $\Spwa$ for different initial conditions and a sinusoidal input, where we can see that they all converge to one another in a periodic steady state, which suggests that the system is incrementally asymptotically stable. We can show that the approach in~\cite{Romanchuk1999} is not able to find a quadratic storage function for this system. The conditions in~\cite{Romanchuk1999} give a quadratic storage function that, when found, is also an incremental Lyapunov function. Then, based on Lemma 5.1 from~\cite{Angeli2009}, this system must admit a quadratic Lyapunov function. Since we can show that no quadratic Lyapunov function exists\footnote{The search for a quadratic Lyapunov function can be expressed as a set of LMIs whose feasibility, in this case, is a necessary and sufficient condition for quadratic stability, and for which no solution can be found.}, no quadratic storage function may exist. Using Theorem~\ref{th:PWAdL2S+dAS}, a piecewise-quadratic storage function may be found and we compute an upper bound on the \dLdg of $\eta = 5.005$. Additionally, Theorem~\ref{th:PWAdL2S+dAS} also ensures that the system is incrementally asymptotically stable. This example illustrates how the search for a piecewise-quadratic storage function is less conservative than a single quadratic function, and allows us to conclude where the latter fails.
\end{example}

\begin{example}
\label{ex:DeadZone}
	Consider the scalar system $\dx = -\kappa(x) + u$ with the output $y = x$. The function $\kappa$ is given by
	\begin{equation}
		\kappa(x) = \begin{cases}
			x  & |x| \leq 1 \\
			\frac{1}{10}x + \frac{9}{10}\sign(x) & 1 < |x| \leq \frac{9}{4} \\
			x - \frac{9}{8}\sign(x) & |x| > \frac{9}{4}
		\end{cases}
	\end{equation}
	This system admits a PWA representation given by
	\begin{equation}
	\begin{aligned}
		A_1 &= -1 \!&\! A_2 &= -\frac 1 {10} \!&\! A_3 &= -1 \!&\! A_4 &= -\frac 1 {10} \!&\! A_5 &= -1 \\
		a_1 &= -\frac 9 8 \!&\! a_2 &= \frac 9 {10} \!&\! a_3 &= 0 \!&\! a_4 &= -\frac 9 {10} \!&\! a_5 &= \frac 9 8
	\end{aligned}
	\rule[-2.5em]{0pt}{0pt}
	\end{equation}
	and $B_i = C_i = 1$, $c_i = 0$, for all $i \in \I$. We shall analyze the difference between the \Ldg and the \dLdg of this system. Using the techniques described e.g. in~\cite{Johansson2003} to find an upper bound on the \Ldg of this system yields $\gamma = 2$. Theorem~\ref{th:PWAdL2S} may be used to estimate an upper bound on the incremental \Ldg, which yields $\eta = 10$. Using the techniques described e.g. in~\cite{Johansson2003} to find an upper bound on the \Ldg of this system yields $\gamma = 2$. A simulation of this system using \Matlab with $\ut = b$ and $\tut = A\sin{(\omega_0t)} + b$ for $A = 0.05$, $b = 1.05$ and $\omega_0 = 0.05$ rad/s from $t = 0$ to $T = 100$ s yields
	\begin{equation}
	 	\frac{\int_{0}^{T}\! \norm{y(\tau) - \ty(\tau)}^2\,d\tau}{\int_{0}^{T}\! \norm{u(\tau) - \tu(\tau)}^2\,d\tau} \approx 8.9
	\end{equation}
	which gives a lower bound on the incremental \Ldg. This simple example illustrates how the \dLdg may be significantly greater than the \Ldg. This shows that \Ldg stability and \dLdg stability are related but different concepts, with the latter being a stronger property.

\end{example}

\begin{example}
\label{ex:Sat}
	
	Let us consider the linear system described by the transfer function $H(s) = (s + 3)/(s + 1)$ that is negatively fed back with a saturated linear gain $\sigma$ given by
	\begin{equation}
	 	\sigma(y) = \begin{cases}
	 		h\sign(y) & |y| > \frac{h}{k} \\
	 		ky & |y| \leq \frac{h}{k}
	 	\end{cases}
	\end{equation}
	The closed loop system admits a PWA representation given by
	\begin{equation}
	\begin{aligned}
		A_1 &= -1 & A_2 &= -\frac{3k+1}{k+1} & A_3 &= -1\\
		a_1 &= -2h & a_2 &= 0 & a_3 &= 2h
	\end{aligned}
	\end{equation}
	and $B_i = 2$, $C_i = 1$, $c_i = 0$, for all $i \in \I$, and $D = 1$. For $h = 5$ and $k = 1$, applying Theorem~\ref{th:PWAdL2S} one can find a continuous piecewise-quadratic storage function $S$ that ensures global incremental asymptotic stability. Fig.~\ref{fig:SatContour} presents the contour plot of $S$, where we can see it is indeed a piecewise-quadratic function of $\bx$. This highlights how the storage function~\eqref{eq:PWQStorFunc} is more flexible than a quadratic function, and thus the results obtained are potentially less conservative. 
	\begin{figure}[tb]
		\centering
		\includegraphics{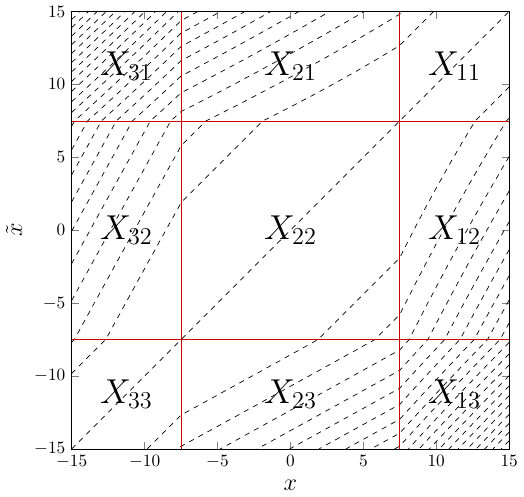}
		\caption{Contour plot of the storage function for the augmented system in Example~\ref{ex:Sat} illustrating its PWA structure}
		\label{fig:SatContour}
	\end{figure}
\end{example}

\section{Connections between \dLdg stability and incremental asymptotic stability}
\label{se:ObsReachCon}

Section~\ref{se:AnalPWA} presents sufficient conditions to assess \dLdg stability and incremental asymptotic stability of nonlinear systems given by a PWA representation. Albeit different, these two notions are fundamentally connected, as we aim to demonstrate in this section. The developments of this section are not restricted to PWA systems, but are valid for general autonomous nonlinear systems with a state representation given by~\eqref{eq:DynSys}. In view of the different nature of both concepts, one being defined as an input-output property and the other characterized in state space, it is clear that we shall need relevant notions of observability and reachability, as is the case when dealing with classical notions of finite gain and asymptotic stability~\cite{Willems1971b,Hill1980a}.

\subsection{Observability and reachability notions}
\label{ss:ObsReach}

We begin by defining a suitable notion of observability. The following definition is based on the notions of uniform irreducibility~\cite{Willems1971b} and incremental observability~\cite{Bemporad2000}.

\begin{definition}[Uniform and quadratic observability]
\label{def:Obs}
	$\Sigma$ is said to be uniformly observable on $\X \subset X$ with respect to $\U \subset \LdpRp$ if there exist $\alpha_o \in \Koo$, and a constant $T_o \geq 0$ such that
	\begin{equation}
	\label{eq:ObsIneq}
	 	\int_t^{t+T_o}\! \norm{y(\phi(\tau,t,x,u_0),u_0(\tau)) - y(\phi(\tau,t,\tx,u_0),u_0(\tau))}^2 \,d\tau \geq \alpha_o(\norm{x - \tx})
	\end{equation}
	for all $x, \tx \in \X$, $t \geq 0$ and any $u_0 \in \U$. If $\alpha_o$ is a quadratic function $\alpha_o(r) = \sigma_o\norm{r}^2$, with $\sigma_o > 0$, then the system is said to be quadratically observable on $\X$ with respect to $\U$.
\end{definition}

We should note that the main difference between the previous definition and the one presented in~\cite{Willems1971b} lies in the fact that the property is defined for any $u \in \U$, instead of for some $u$. This distinction is important as it will allow to obtain incremental asymptotic stability of~\eqref{eq:DynSys} for every $u \in \U$. We state now a definition of reachability taken from~\cite{Fromion1997}, for our purposes renamed uniform reachability.

\begin{definition}[Uniform and quadratic reachability]
\label{def:Reach}
	The state space of $\Sigma$ is said to be uniformly reachable from $x_0$ if it is reachable from $\xo$ and there exist $\alpha_r \in \Koo$ and $T_r \geq 0$ satisfying
	\begin{equation}
	\label{eq:ReachIneq}
	 	\int_{t-T_r}^t\! \norm{u(\tau) - \tu(\tau)}^2 \, d\tau \leq \alpha_r(\norm{x - \tx})
	\end{equation}
	for all $x, \tx \in X$ and $t \geq T_r$, where $u, \tu \in \LdpRp$ and $x = \phi(t,t-T_r,x_0,u)$, $\tx = \phi(t,t-T_r,x_0,\tu)$. If $\alpha_r$ is a quadratic function $\alpha_r(r) = \sigma_r\norm{r}^2$, with $\sigma_r > 0$, then the system is said to be quadratically reachable from $\xo$.
\end{definition}

As it is standard procedure when dealing with non-incremental properties (see e.g.~\cite{Vidyasagar1993}), the stronger assumption requiring $\alpha_o$ and $\alpha_r$ to be quadratic functions will be capital in establishing incremental exponential stability.

\subsection{Intermediary results}
\label{ss:Lemmas}

We shall make further use of the dissipativity framework, which is a powerful tool to link input-output properties and state-space behavior. Before stating some key lemmas that will be used in the follow-up, let us recall some fundamental definitions and results concerning dissipativity, most of which are taken or adapted from the seminal paper~\cite{Willems1972a}. We begin by defining the available storage function.

\begin{definition}[Available storage]
	The available storage of system~\eqref{eq:DynSysAug} with supply rate $w$ is the function from $X \times X$ to $\bR_+$ defined by
	\begin{multline}
	\label{eq:SaDef}
	 	S_a\big(x,\tx\big) = \sup \left\lbrace-\int_t^{t+T}\! w\big(\utau,\tutau,\bytau\big) \,d\tau \Biggm| T \geq 0, (u,\tu,x,\tx,\by) \text{ satisfy}~\eqref{eq:DynSysAug}\right. \\
	 	\left.\text{with } \xt = x, \txt = \tx, \text{ and } u,\tu \in \LdpRp \vphantom{\Biggm|}\right\rbrace
	\end{multline}
	The supremum is taken over all $T \geq 0$, and all motions starting in $(x,\tx)$ at time $t = 0$ under any $u,\tu \in \LdpRp$.
\end{definition}

The following theorem, adapted from~\cite{Willems1972a}, gives an important characterization of the available storage function.

\begin{theorem}
	The available storage $S_a$ of system~\eqref{eq:DynSysAug} with respect to~\eqref{eq:IncSupRate}, is finite for all $x,\tx \in X$ if and only if $\Sigma$ is incrementally dissipative. Moreover, any storage function $S$ is such that $0 \leq S_a \leq S$ for dissipative dynamical systems and $S_a$ is itself a possible storage function.
\end{theorem}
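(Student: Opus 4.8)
The argument follows the classical characterization of the available storage due to Willems~\cite{Willems1972a}, transported to the augmented system~\eqref{eq:DynSysAug} with the incremental supply rate~\eqref{eq:IncSupRate}. Writing the integrand as $-\bw(u,\tu,\by) = \norm{\by}^2 - \eta^2\norm{u-\tu}^2$, I would first record that $S_a \geq 0$ unconditionally: choosing $T = 0$ in~\eqref{eq:SaDef} makes the integral vanish, so the supremum defining $S_a(x,\tx)$ is at least $0$, independently of any dissipativity hypothesis.

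For the implication that incremental dissipativity yields finiteness, together with the upper sandwich, suppose $\bSig$ admits a storage function $S$, i.e.~\eqref{eq:DI} holds for the augmented dynamics. Applying~\eqref{eq:DI} to an arbitrary motion issuing from $(x,\tx)$ at time $0$ gives $S(x(T),\tx(T)) - S(x,\tx) \leq \int_0^T \bw\,d\tau$ for every $T \geq 0$ and every $u,\tu \in \LdpRp$. Since $S(x(T),\tx(T)) \geq 0$, this rearranges to $-\int_0^T \bw\,d\tau \leq S(x,\tx)$, a bound independent of $T$, $u$ and $\tu$. Taking the supremum in~\eqref{eq:SaDef} yields $S_a(x,\tx) \leq S(x,\tx) < +\infty$, which simultaneously proves finiteness of $S_a$ and the inequality $0 \leq S_a \leq S$ valid for every storage function $S$.

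The converse implication (finiteness yields dissipativity) and the assertion that $S_a$ is itself a storage function rest on a principle-of-optimality (concatenation) argument, which I expect to be the delicate point. Assume $S_a$ is finite everywhere and let a motion of~\eqref{eq:DynSysAug} be driven from $(x_0,\tx_0)$ at $t_0$ to $(x_1,\tx_1)$ at $t_1$ under some $u,\tu$. By stationarity of $\Sigma$ I reset the initial time to $t_1$, and by causality of~\eqref{eq:DynSysAug} any admissible motion continuing from $(x_1,\tx_1)$ can be spliced after the prefix segment; the energy extracted along the concatenation equals $-\int_{t_0}^{t_1}\bw\,d\tau$ plus the energy extracted along the continuation. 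As $S_a(x_0,\tx_0)$ is the supremum over \emph{all} motions starting at $(x_0,\tx_0)$, it dominates every such concatenation, and taking the supremum over continuations gives $S_a(x_0,\tx_0) \geq -\int_{t_0}^{t_1}\bw\,d\tau + S_a(x_1,\tx_1)$. Rearranged, this is precisely the dissipation inequality~\eqref{eq:DI} for $S_a$.

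Since $S_a$ is nonnegative and satisfies~\eqref{eq:DI}, it is a legitimate storage function, so $\bSig$ is dissipative with respect to~\eqref{eq:IncSupRate}; under the reachability assumption of Theorem~\ref{th:IncDiss}, the equivalence there then promotes plain dissipativity to incremental dissipativity of $\Sigma$, closing the chain of implications. The principal obstacle is to make the concatenation rigorous: because the supremum in~\eqref{eq:SaDef} need not be attained, I would argue with $\varepsilon$-optimal continuations and pass to the limit $\varepsilon \to 0$, and I would verify that $\LdpRp$ is stable under the truncate-and-splice operation so that the glued input remains admissible and the two motions of~\eqref{eq:DynSysAug} indeed concatenate into a single trajectory.
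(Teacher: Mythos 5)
The paper itself offers no proof of this theorem---it is stated as adapted from Willems and left to the citation---so the relevant comparison is with Willems' classical argument, which your proposal reconstructs faithfully. Your three core steps are right and in the right order: nonnegativity of $S_a$ via the choice $T=0$ in~\eqref{eq:SaDef}; finiteness together with the sandwich $0 \leq S_a \leq S$ by applying~\eqref{eq:DI} to the augmented system, discarding the nonnegative terminal term $S(x(T),\tx(T))$, and taking the supremum over $T$, $u$, $\tu$; and the converse via the principle-of-optimality concatenation, where you correctly identify that the supremum need not be attained and that $\varepsilon$-optimal continuations plus closure of $\LdpRp$ under truncate-and-splice are what make the gluing rigorous. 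Rearranging $S_a(x_0,\tx_0) \geq -\int_{t_0}^{t_1}\bw\,d\tau + S_a(x_1,\tx_1)$ does give~\eqref{eq:DI} for $S_a$, so $S_a$ is a legitimate storage function whenever it is finite.

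The one step that does not go through as written is the final promotion of dissipativity of $\bSig$ to ``incremental dissipativity of $\Sigma$.'' By the paper's own convention, incremental dissipativity means the equivalent statements of Theorem~\ref{th:IncDiss} hold, and statement (ii) there demands more than dissipativity with respect to~\eqref{eq:IncSupRate}: it requires a time-independent storage function satisfying $S(x,x) = 0$ for all $x \in X$. Finiteness of $S_a$ and the concatenation argument give you a nonnegative storage function, but not its vanishing on the diagonal, and reachability alone does not repair this---the implication (ii) $\Rightarrow$ (i) in Theorem~\ref{th:IncDiss} is exactly where the diagonal normalization is consumed, since the dissipation inequality from $(x_0,x_0)$ yields $\int_0^T \norm{y-\ty}^2\,d\tau \leq \eta^2\int_0^T\norm{u-\tu}^2\,d\tau + S_a(x_0,x_0)$, with the bias term killed only if $S_a(x_0,x_0)=0$. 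To close the chain you must additionally verify $S_a(x,x)=0$, i.e., that $\sup_{T,u,\tu}\int_0^T\bigl(\norm{y-\ty}^2 - \eta^2\norm{u-\tu}^2\bigr)d\tau = 0$ along pairs of motions issued from a common state---equivalently, that the truncated incremental gain from each diagonal state is at most $\eta$. Note also that once any storage function $S$ with $S(x,x)=0$ exists, the sandwich $S_a \leq S$ forces $S_a(x,x)=0$, so this diagonal property of $S_a$ is genuinely part of the theorem's content and not a dispensable refinement; your proof should either establish it or restrict the converse claim to plain dissipativity of $\bSig$ in Willems' sense.
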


We now state a series of lemmas concerning the implications of observability and reachability over the structure of the storage function $S$. These results will be useful in establishing Theorem~\ref{th:dL2+Obs/Reach->dAS}. We begin by showing that the storage function is non-increasing along motions driven by the same input $u = \tu$.

\begin{lemma}
\label{lem:SNonIncreas}
	If system~\eqref{eq:DynSys} is incrementally dissipative, then every storage function $S$ is such that
	\begin{equation}
	 	S\big(x(t_2),\tx(t_2)\big) \leq S\big(x(t_1),\tx(t_1)\big)
	\end{equation}
	for any $t_2 \geq t_1$, with $x(t_2) = \phi(t_2,t_1,x(t_1),u_0)$ and $\tx(t_2) = \phi(t_2,t_1,\tx(t_1),u_0)$ for any $u_0 \in \LdpRp$.
\end{lemma}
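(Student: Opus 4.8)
The plan is to derive the claim directly from the dissipation inequality, specialized to the case of a common input. By hypothesis the system is incrementally dissipative, so by Theorem~\ref{th:IncDiss} the augmented system $\bSig$ is dissipative with respect to the supply rate $\bw$ of~\eqref{eq:IncSupRate}, and hence every storage function $S$ satisfies the dissipation inequality~\eqref{eq:DI} along any motion of~\eqref{eq:DynSysAug}. The monotonicity we seek should follow simply from the sign of the supply rate along motions driven by identical inputs.

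First I would restrict attention to the motion obtained by feeding both copies of the system with the same input, i.e.\ $u = \tu = u_0$ for an arbitrary $u_0 \in \LdpRp$, producing exactly the trajectories $x(t) = \phi(t,t_1,x(t_1),u_0)$ and $\tx(t) = \phi(t,t_1,\tx(t_1),u_0)$ that appear in the statement. The key observation is that on such a motion the input term of the supply rate vanishes, since $\bw(u_0,u_0,\by) = \eta^2\norm{u_0 - u_0}^2 - \norm{\by}^2 = -\norm{\by}^2 \leq 0$; that is, the supply rate is nonpositive everywhere along the common-input motion. Applying~\eqref{eq:DI} on the interval $[t_1,t_2]$ with $t_2 \geq t_1$ then gives
\[
	S\big(x(t_2),\tx(t_2)\big) - S\big(x(t_1),\tx(t_1)\big) \leq \int_{t_1}^{t_2}\! \bw\big(u_0,u_0,\bytau\big)\,d\tau = -\int_{t_1}^{t_2}\! \norm{\bytau}^2\,d\tau \leq 0,
\]
which is precisely the asserted inequality. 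Because the dissipation inequality holds for \emph{every} storage function by definition of dissipativity, the conclusion is valid for all storage functions $S$, as required.

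There is essentially no substantive obstacle in this argument: the only point requiring care is the recognition that the restriction $u = \tu = u_0$ annihilates the positive (input) part of the incremental supply rate, leaving a nonpositive quantity. Once this is noted, the claimed non-increase of $S$ along common-input motions is an immediate consequence of~\eqref{eq:DI}, and no appeal to reachability, stationarity, or the specific structure of the state representation is needed.
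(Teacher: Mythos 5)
Your proposal is correct and follows essentially the same route as the paper: both invoke the dissipation inequality~\eqref{eq:DI} for the augmented system on $[t_1,t_2]$, specialize to a common input $u = \tu = u_0$ so that the supply rate~\eqref{eq:IncSupRate} reduces to $-\norm{\by}^2 \leq 0$, and conclude that $S$ is non-increasing. Your additional remark that neither reachability nor stationarity is needed at this step is accurate and consistent with the paper's argument.
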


\begin{proof}
	From the dissipation inequality~\eqref{eq:DI} from $t_1$ to $t_2$ we have
	\begin{equation}
	 	S\big(x(t_2),\tx(t_2)\big) - S\big(x(t_1),\tx(t_1)\big) \leq \eta^2\int_{t_1}^{t_2}\! \norm{u(\tau) - \tu(\tau)}^2\,d\tau - \int_{t_1}^{t_2}\!\norm{y(\tau) - \ty(\tau)}^2 \, d\tau
	\end{equation}
	and hence choosing $u = \tu \in \LdpRp$ yields
	\begin{equation}
	 	S\big(x(t_2),\tx(t_2)\big) - S\big(x(t_1),\tx(t_1)\big) \leq  - \int_{t_1}^{t_2}\!\norm{y(\tau) - \ty(\tau)}^2 \, d\tau \leq 0
	\end{equation}
	which proves the statement.
\end{proof}

The next two lemmas are adapted from~\cite{Fromion1997}. We begin by showing that uniform observability and reachability imply that the storage function is upper and lower bounded by class $\Koo$ functions of $\norm{x - \tx}$.

\begin{lemma}
\label{lem:SBounds}
	If system~\eqref{eq:DynSys} is incrementally \Ldg stable with an \dLdg less than or equal to $\eta$, and it is uniformly observable on $\X$ with respect to $\U$ and uniformly reachable from $\xo$, then there exist $\alpha_o, \alpha_r \in \Koo$ and a storage function $S$ such that
	\begin{equation}
	\label{eq:SBounds}
 		\alpha_o(\norm{x - \tx}) \leq S(x,\tx) \leq \eta^2\alpha_r(\norm{x - \tx})
	\end{equation}
	on the set $\X$. If in addition~\eqref{eq:DynSys} is quadratically observable on $\X$ with respect to $\U$ and quadratically reachable from $\xo$, then there exist positive scalars $\sigma_o$ and $\sigma_r$ such that
	\begin{equation}
	\label{eq:SQBounds}
		\sigma_o \norm{x - \tx}^2 \leq S(x,\tx) \leq \eta^2\sigma_r\norm{x - \tx}^2
	\end{equation}
\end{lemma}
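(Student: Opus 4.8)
The plan is to fix a single storage function and obtain the two bounds from opposite ends of the dissipativity machinery: the lower bound from the available storage together with observability, and the upper bound from the dissipation inequality~\eqref{eq:DI} together with reachability. Concretely, I would take $S$ to be the storage function furnished by Theorem~\ref{th:IncDiss}, which satisfies $S(x,x)=0$ for all $x \in X$; this is precisely the choice that lets both bounds hold for the \emph{same} function.

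For the lower bound, I would first recall that every storage function dominates the available storage, $S \geq S_a$, so it suffices to bound $S_a$ from below. In the supremum defining $S_a$ in~\eqref{eq:SaDef}, whose integrand with the supply rate~\eqref{eq:IncSupRate} is $\norm{\by}^2 - \eta^2\norm{u-\tu}^2$, I would restrict attention to the particular admissible choice $u = \tu = u_0$ for some fixed $u_0 \in \U$ and horizon $T = T_o$. This annihilates the $\eta^2\norm{u-\tu}^2$ term and leaves $\int_0^{T_o}\norm{\by(\tau)}^2\,d\tau = \int_0^{T_o}\norm{y(\tau)-\ty(\tau)}^2\,d\tau$, which uniform observability~\eqref{eq:ObsIneq} bounds below by $\alpha_o(\norm{x-\tx})$ after using stationarity to set the initial time to $0$. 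Hence $S(x,\tx) \geq S_a(x,\tx) \geq \alpha_o(\norm{x-\tx})$.

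For the upper bound, I would steer the augmented system from the diagonal to the prescribed target using reachability. By uniform reachability, for any $(x,\tx)$ there exist inputs $u,\tu$ and a common dwell $T_r$ with $x = \phi(t,t-T_r,x_0,u)$ and $\tx = \phi(t,t-T_r,x_0,\tu)$, so that the augmented state at time $t-T_r$ equals $(x_0,x_0,1)$, where $S$ vanishes. Writing the dissipation inequality~\eqref{eq:DI} on $[t-T_r,t]$, substituting $S(x_0,x_0)=0$, and discarding the nonpositive $-\norm{\by}^2$ contribution yields $S(x,\tx) \leq \eta^2\int_{t-T_r}^t\norm{u(\tau)-\tu(\tau)}^2\,d\tau$, which the reachability inequality~\eqref{eq:ReachIneq} bounds by $\eta^2\alpha_r(\norm{x-\tx})$.

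The quadratic refinement is then immediate: specialising $\alpha_o(r)=\sigma_o r^2$ and $\alpha_r(r)=\sigma_r r^2$ converts the two $\Koo$ bounds into $\sigma_o\norm{x-\tx}^2 \leq S(x,\tx) \leq \eta^2\sigma_r\norm{x-\tx}^2$. I expect the main obstacle to be the bookkeeping over \emph{which} storage function carries each bound: the lower estimate is cleanest for $S_a$, whereas the upper estimate needs a storage function vanishing on the diagonal, and the two are reconciled only because the Theorem~\ref{th:IncDiss} storage function both dominates $S_a$ and vanishes at $(x_0,x_0)$. A secondary subtlety is the correct use of stationarity and of the ``for any $u_0 \in \U$'' clause in Definition~\ref{def:Obs}, which is what makes the restricted supremum in $S_a$ a legitimate lower bound uniformly over the chosen input.
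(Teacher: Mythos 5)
Your proposal is correct and follows essentially the same route as the paper: the lower bound via $S \geq S_a$ with the suboptimal choice $u = \tu \in \U$ in the supremum and then uniform observability, and the upper bound via the dissipation inequality~\eqref{eq:DI} started from the diagonal (using $S(x_0,x_0)=0$ from Theorem~\ref{th:IncDiss}) combined with the reachability estimate~\eqref{eq:ReachIneq}. The only cosmetic difference is that you write the dissipation inequality directly on $[t-T_r,t]$, whereas the paper writes it on $[0,t]$ with $u(\tau)=\tu(\tau)$ for $\tau < t-T_r$ --- these are equivalent, and your closing remarks on which storage function carries which bound match the paper's implicit bookkeeping exactly.
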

\begin{proof}	
	\noindent\emph{Lower bound} -
	Let us consider the available storage function~\eqref{eq:SaDef}. The suboptimality of the couple of inputs $u = \tu \in \U$ yields
	\begin{equation}
	 	S_a(\xt,\txt) \geq \int_t^{t+T}\!\norm{y(\tau) - \ty(\tau)}^2 \, d\tau \quad \forall T \geq 0
	\end{equation}
	where $\ytau = h(\phi(\tau,t,\xt,u),\utau)$ and $\tytau = h(\phi(\tau,t,\txt,u),\utau)$. Since the system is uniformly observable on $\X$ with respect to $\U$, there exist $T_o \geq 0$ and $\alpha_o \in \Koo$ such that~\eqref{eq:ObsIneq} is respected. Picking $T \geq T_o$ in the previous inequality, and using the fact that $S(x,\tx) \geq S_a(x,\tx)$ yields
	\begin{equation}
	 	S(x,\tx) \geq \alpha_o(\norm{x - \tx}), \qquad \forall x, \tx \in \X
	\end{equation}
	
	\noindent\emph{Upper bound} -
	We recall the dissipation inequality~\eqref{eq:DI} taken from $t_0 = 0$:
	\begin{equation}
	 	S\big(\xt,\txt\big) - S\big(x_0,\tx_0\big) \leq \eta^2\int_{0}^{t}\! \norm{u(\tau) - \tu(\tau)}^2\,d\tau - \int_{0}^{t}\!\norm{y(\tau) - \ty(\tau)}^2 \, d\tau
	\end{equation}
	By choosing $x_0 = \tx_0$ and recalling that $S(\xo,\xo) = 0$, $\forall \xo \in X$, we can write
	\begin{equation}
	 	S\big(x(t),\tx(t)\big) \leq \eta^2\int_{0}^{t}\! \norm{u(\tau) - \tu(\tau)}^2\,d\tau
	\end{equation}
	Since the system is uniformly reachable from $\xo$, there exist $T_r \geq 0$ and $\alpha_r \in \Koo$ such that~\eqref{eq:ReachIneq} is respected. For $t \geq T_r$, and choosing $u, \tu$ such that $u(\tau) = \tu(\tau)$ for $0 \leq \tau < t - T$, the previous inequality becomes
	\begin{equation}
	 	S(x,\tx) \leq \eta^2\alpha_r(\norm{x - \tx})
	\end{equation}
	as claimed. The quadratic versions follow in the same manner.
\end{proof}

The next lemma gives an important characterization of the decrease of the storage function along observable motions.

\begin{lemma}
\label{lem:S-SNegDef}
	If system~\eqref{eq:DynSys} is incrementally \Ldg stable and uniformly observable on $\X$ with respect to $\U$, then there exist $T_o \geq 0$ and $\alpha_o \in \Koo$ so that, for every $T \geq T_0$, all storage functions $S$ respecting the dissipation inequality~\eqref{eq:DI} with the supply rate given by~\eqref{eq:IncSupRate} are such that
	\begin{equation}
	\label{eq:S-SNegDef}
	 	S\big(x(t+T),\tx(t+T)\big) - S\big(\xt,\txt\big) \leq -\alpha_o(\norm{\xt - \txt})
	\end{equation}
	where $x(t+T) = \phi(t+T,t,\xt,u_0)$ and $\tx(t+T) = \phi(t+T,t,\txt,u_0)$, for  $\xt, \txt \in \X$ and for every $u_0 \in \U$. If in addition system~\eqref{eq:DynSys} is quadratically observable on $\X$ with respect to $\U$, then~\eqref{eq:S-SNegDef} becomes
	\begin{equation}
	\label{eq:S-SQNegDef}
		S\big(x(t+T),\tx(t+T)\big) - S\big(\xt,\txt\big) \leq -\sigma_o\norm{x - \tx}^2
	\end{equation}
\end{lemma}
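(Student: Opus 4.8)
The plan is to combine two ingredients — the dissipation inequality~\eqref{eq:DI} specialized to a common input, and the uniform observability bound~\eqref{eq:ObsIneq} — to obtain the estimate over the horizon $T_o$, and then to promote it to every $T \geq T_o$ by monotonicity of the storage function along equally-driven motions.

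First I would fix any storage function $S$ satisfying~\eqref{eq:DI} with the incremental supply rate~\eqref{eq:IncSupRate}, and write that inequality from $t$ to $t + T_o$ along the two motions $x(\cdot)$ and $\tx(\cdot)$ generated by the \emph{same} input $u_0 = \tu_0 \in \U$. The crucial observation is that the choice $u = \tu = u_0$ makes the term $\eta^2\norm{u - \tu}^2$ of~\eqref{eq:IncSupRate} vanish identically, so the dissipation inequality reduces to
\begin{equation*}
  S\big(x(t+T_o),\tx(t+T_o)\big) - S\big(\xt,\txt\big) \leq -\int_t^{t+T_o}\!\norm{\ytau - \tytau}^2\,d\tau.
\end{equation*}
By Definition~\ref{def:Obs}, uniform observability on $\X$ with respect to $\U$ supplies the very constant $T_o \geq 0$ and the function $\alpha_o \in \Koo$ for which~\eqref{eq:ObsIneq} lower-bounds this integral by $\alpha_o(\norm{\xt - \txt})$; substituting yields~\eqref{eq:S-SNegDef} at the single horizon $T = T_o$.

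To reach arbitrary $T \geq T_o$, I would invoke Lemma~\ref{lem:SNonIncreas}: because both motions are driven by the common input $u_0$, the storage function is non-increasing along the pair, hence $S(x(t+T),\tx(t+T)) \leq S(x(t+T_o),\tx(t+T_o))$. Chaining this with the previous bound gives~\eqref{eq:S-SNegDef} for every $T \geq T_o$. The quadratic refinement~\eqref{eq:S-SQNegDef} is then immediate: under quadratic observability one simply has $\alpha_o(r) = \sigma_o r^2$, so the same chain of inequalities produces the bound $-\sigma_o\norm{\xt - \txt}^2$.

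I do not expect a genuine obstacle here; the argument is short once the pieces are lined up. The only point demanding care is the consistent use of a single common input $u_0$ across all three ingredients — the dissipation inequality, the observability inequality~\eqref{eq:ObsIneq}, and Lemma~\ref{lem:SNonIncreas} — together with the verification that this choice annihilates the input contribution to the supply rate, since it is precisely this cancellation that converts the dissipation inequality into a strict decrease controlled purely by the output mismatch.
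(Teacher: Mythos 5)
Your proof is correct and takes essentially the same approach as the paper: specialize the dissipation inequality to a common input $u_0 = \tilde{u}_0$ so the supply rate reduces to the negative output-energy integral, then invoke uniform observability to lower-bound that integral by $\alpha_o$. The only cosmetic difference is how you pass from horizon $T_o$ to arbitrary $T \geq T_o$ --- the paper applies~\eqref{eq:DI} over the full interval $[t,t+T]$ and discards the nonnegative extra output energy, whereas you chain Lemma~\ref{lem:SNonIncreas} after time $t+T_o$; the two devices are interchangeable.
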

\begin{proof}
	Theorem~\ref{th:IncDiss} states that~\eqref{eq:DynSysAug} is dissipative with respect to~\eqref{eq:IncSupRate}. From the dissipation inequality~\eqref{eq:DI}, considered from $t$ to $t + T$, we have
	\begin{equation}
	 	S\big(x(t+T),\tx(t+T)\big) - S\big(\xt,\txt\big) \leq \eta^2\!\!\int_{t}^{t+T}\! \norm{u(\tau) \!-\! \tu(\tau)}^2\,d\tau - \int_{t}^{t+T}\!\norm{y(\tau) \!-\! \ty(\tau)}^2 \, d\tau
	\end{equation}
	and hence choosing $u, \tu \in \U$ such that $u(\tau) = \tu(\tau)$ for $t \leq \tau \leq t + T$ yields
	\begin{equation}
		S\big(x(t+T),\tx(t+T)\big) - S\big(\xt,\txt\big) \leq - \int_{t}^{t+T}\!\norm{y(\tau) - \ty(\tau)}^2 \, d\tau
	\end{equation}
	Since the system is uniformly observable on $\X$ with respect to $\U$, there exist $T_o \geq 0$ and $\alpha_o \in \Koo$ such that~\eqref{eq:ObsIneq} is respected. Hence for $T \geq T_o$ and $\xt, \txt \in \X$, the claim follows. The quadratic version follows similarly.
\end{proof}

Let us note that, in comparison with Lemma~\ref{lem:SNonIncreas}, observability allows us to quantify the decay of the storage function. Finally, the next lemma gives a different characterization of~\eqref{eq:S-SNegDef} which will be capital in establishing Theorem~\ref{th:dL2+Obs/Reach->dAS}, the main result of this section. This result is adapted from~\cite{Doban2016}.

\begin{lemma}
\label{lem:S-SNegDef2}
If there exists $T_o \geq 0$ and $\alpha_o \in \Koo$ such that the storage function $S$ respects~\eqref{eq:S-SNegDef} for every $T \geq T_o$, then there exists a class $\Koo$ function $\rho$, with $\rho(r) < r$, $\forall r > 0$ such that
\begin{equation}
 	\label{eq:S-SNegDef2}
 	S\big(x(t+T),\tx(t+T)\big) \leq \rho\big(S\big(\xt,\txt\big)\big)
\end{equation}
for every $T \geq T_o$, where $x(t+T) = \phi(t+T,t,\xt,u_0)$ and $\tx(t+T) = \phi(t+T,t,\txt,u_0)$, for  $\xt, \txt \in \X$ and for every $u_0 \in \U$.
\end{lemma}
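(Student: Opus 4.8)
The plan is to convert the \emph{additive} decrease~\eqref{eq:S-SNegDef} into the \emph{functional} contraction~\eqref{eq:S-SNegDef2} by expressing the decrement $\alpha_o(\norm{\xt-\txt})$ as a class-$\Koo$ function of the value $S(\xt,\txt)$ itself, and then dominating the resulting one-step map by a genuine comparison function. First I would bound the decrement from below in terms of $S$. Using the upper bound on the storage function supplied by Lemma~\ref{lem:SBounds} (valid under the standing reachability assumption), namely $S(x,\tx)\le\eta^2\alpha_r(\norm{x-\tx})$, inversion gives $\norm{x-\tx}\ge\alpha_r^{-1}\big(S(x,\tx)/\eta^2\big)$, and since $\alpha_o$ is increasing,
\[
	\alpha_o(\norm{x-\tx}) \ge \kappa\big(S(x,\tx)\big), \qquad \kappa(s) := \alpha_o\big(\alpha_r^{-1}(s/\eta^2)\big) \in \Koo .
\]
Substituting into~\eqref{eq:S-SNegDef} and writing $s := S(\xt,\txt)$ yields, for all $T\ge T_o$,
\[
	S\big(x(t+T),\tx(t+T)\big) \le s - \kappa(s) =: \rho_0(s) .
\]
Nonnegativity of $S$ forces $\kappa(s)\le s$ along the actual motions, so $\rho_0\ge 0$, while $\rho_0(s)<s$ for every $s>0$ because $\kappa(s)>0$.

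The main obstacle is that $\rho_0=\id-\kappa$ need not be monotone, hence is not yet of class $\Koo$; repairing this is the only delicate step. I would handle it in two standard moves. (a) Pass to the running supremum $\bar\rho(s):=\sup_{0\le\tau\le s}\rho_0(\tau)$, which is continuous and nondecreasing, satisfies $\bar\rho(0)=0$ and $\bar\rho\ge\rho_0$, and still obeys $\bar\rho(s)<s$ for $s>0$: the maximum over the compact interval $[0,s]$ is attained at some $\tau^\ast$, where either $\tau^\ast=0$ and the value is $0<s$, or $\tau^\ast>0$ and the value is $\tau^\ast-\kappa(\tau^\ast)<\tau^\ast\le s$. (b) Symmetrize with the identity by setting $\rho(s):=\tfrac12\big(\bar\rho(s)+s\big)$, so that $\rho$ is continuous, strictly increasing, unbounded (since $\rho(s)\ge s/2$) and $\rho(0)=0$, hence $\rho\in\Koo$; moreover $\bar\rho(s)\le\rho(s)<s$ for all $s>0$, the first inequality following from $\rho(s)-\bar\rho(s)=\tfrac12\big(s-\bar\rho(s)\big)\ge 0$.

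Chaining the three estimates then gives
\[
	S\big(x(t+T),\tx(t+T)\big) \le \rho_0(s) \le \bar\rho(s) \le \rho(s) = \rho\big(S(\xt,\txt)\big),
\]
which is precisely~\eqref{eq:S-SNegDef2}, with $\rho\in\Koo$ and $\rho(r)<r$ for all $r>0$. I expect everything except the monotonization in step~(a)--(b) to be routine bookkeeping on compositions and inverses of class-$\Koo$ functions; the essential conceptual input is the upper bound on $S$, without which a state with large $S$ but small $\norm{\xt-\txt}$ could make the decrement negligible and defeat the strict inequality $\rho<\id$.
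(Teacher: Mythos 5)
Correct, and essentially the paper's own route: like the paper, you invoke the reachability upper bound of Lemma~\ref{lem:SBounds} to convert the additive decrease~\eqref{eq:S-SNegDef} into the one-step bound $S\big(x(t+T),\tx(t+T)\big)\le\big(\id-\alpha_o\circ(\eta^2\alpha_r)^{-1}\big)\big(S(\xt,\txt)\big)$ and then dominate this possibly non-monotone map by a $\Koo$ function strictly below the identity. The only divergence is in that final step, where the paper inserts a slack factor $0.5$ in front of $\alpha_o$ and cites~\cite{Doban2016} for the existence of the $\Koo$ majorant $\rho$ with $\hat\rho\le\rho<\id$, whereas you construct it explicitly (running supremum, then averaging with $\id$) --- a sound, self-contained replacement; your remark that $\rho_0\ge 0$ ``along the actual motions'' is slightly loose but immaterial, since $\id-\kappa\ge 0$ in fact holds pointwise by the two-sided bound of Lemma~\ref{lem:SBounds}, and your monotonization argument never uses it anyway.
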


\begin{proof}
	Let us define $\hat{\alpha}_r := \eta^2\alpha_r$. Using~\eqref{eq:S-SNegDef}, we can write for $x \neq \tx$
	\begin{align}
		0 \leq S\big(x(t+T),\tx(t+T)\big) &\leq S\big(\xt,\txt\big) -\alpha_o(\norm{x - \tx}) \nonumber \\
		&< S\big(\xt,\txt\big) - 0.5\alpha_o(\norm{x - \tx})
	\end{align}
	Using Lemma~\ref{lem:SBounds} then yields
	\begin{equation}
	\label{eq:S-SNegDef2_aux}
		0 \leq S\big(x(t+T),\tx(t+T)\big) < (\id - 0.5\alpha_o\circ\halpha_r^{-1})\big(S\big(\xt,\txt\big)\big)
	\end{equation}
	
	Similarly, for $x \neq \tx$, we have
	\begin{align}
		0 \leq S\big(x(t+T),\tx(t+T)\big) &\leq S\big(\xt,\txt\big) -\alpha_o(\norm{x - \tx}) \nonumber \\
		&< (\halpha_r - 0.5\alpha_o)(\norm{x - \tx})
	\end{align}
	and then $(\halpha_r - 0.5\alpha_o)(r) > 0$, $\forall r > 0$. Since $\halpha_r^{-1} \in \Koo$, we have $(\halpha_r - 0.5\alpha_o)\circ\halpha_r^{-1}(r) > 0$ and then
	\begin{equation}
	 	0 < (\id - 0.5\alpha_o\circ\halpha_r^{-1})(r) < r, \qquad \forall r > 0
	\end{equation}
	Let us write $\hat{\rho} := (\id - 0.5\alpha_o\circ\halpha_r^{-1})$ and note that this function is continuous and positive definite. Since the previous inequality is strict, there exists $\rho \in \Koo$ such that $\hat{\rho}(r) \leq \rho(r) < r$ and then from~\eqref{eq:S-SNegDef2_aux} we can write
	\begin{equation}
		S\big(x(t+T),\tx(t+T)\big) \leq \rho\big(S\big(\xt,\txt\big)\big)
	\end{equation}
	which concludes the proof.
\end{proof}

\subsection{Connection between \dLdg and incremental asymptotic stability}
\label{ss:Connect}

We are now ready to state the following theorem connecting \dLdg stability and incremental asymptotic stability. As stated, the bridge allowing to connect both concepts is based on the appropriate observability and reachability notions.

\begin{theorem}
\label{th:dL2+Obs/Reach->dAS}
	 Assume that system~\eqref{eq:DynSys} is incrementally \Ldg stable with \dLdg less than or equal to $\eta$. Assume also that~\eqref{eq:DynSys} is uniformly observable on $\R^n$ with respect to $\U$ and uniformly reachable from $\xo$. Then it is also incrementally asymptotically stable with respect to $\U$. If in addition system~\eqref{eq:DynSys} is quadratically observable on $\R^n$ with respect to $\U$ and quadratically reachable from $\xo$, then it is incrementally exponentially stable with respect to $\U$.
\end{theorem}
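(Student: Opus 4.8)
The plan is to build the required $\KL$ estimate directly from a storage function of the augmented system, using the three structural lemmas established above. First I would invoke Theorem~\ref{th:IncDiss}: since $\Sigma$ is assumed incrementally \Ldg stable and its state space is reachable from $\xo$, the augmented system $\bSig$ is incrementally dissipative with respect to the supply rate~\eqref{eq:IncSupRate}, so there is a storage function $S: X \times X \to \R_+$ with $S(x,x) = 0$ satisfying the dissipation inequality~\eqref{eq:DI}. By Lemma~\ref{lem:SBounds}, uniform observability on $\R^n$ and uniform reachability from $\xo$ allow me to select this $S$ so that $\alpha_o(\norm{x-\tx}) \leq S(x,\tx) \leq \eta^2\alpha_r(\norm{x-\tx})$ for some $\alpha_o,\alpha_r \in \Koo$.

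Next I would exploit the two decay properties of $S$ along motions driven by a common input $u_0 \in \U$. Lemma~\ref{lem:SNonIncreas} shows that $S(\xt,\txt)$ is non-increasing in $t$, while Lemma~\ref{lem:S-SNegDef} combined with Lemma~\ref{lem:S-SNegDef2} furnishes a $T_o \geq 0$ and a $\rho \in \Koo$ with $\rho(r) < r$ for $r>0$ such that $S(x(t+T),\tx(t+T)) \leq \rho(S(\xt,\txt))$ for every $T \geq T_o$. Fixing $T := T_o$ and applying this at the sample instants $t_k = kT$, composition of the state transition map gives by induction on $k$ that $S(x(kT),\tx(kT)) \leq \rho^k(S(x_0,\tx_0))$. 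Since $\rho$ is a $\Koo$ contraction, its iterates decrease monotonically to $0$ and are majorized by a class $\KL$ function of $(S(x_0,\tx_0),k)$; combining this with the non-increasing property to cover each interval $[kT,(k+1)T)$ and with the bounds on $S$ converts the sampled estimate into a continuous bound $\norm{\xt-\txt} \leq \beta(\norm{x_0-\tx_0},t)$ with $\beta \in \KL$, which is exactly incremental asymptotic stability with respect to $\U$.

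For the exponential statement I would rerun the argument with the quadratic bounds. Under quadratic observability and reachability, Lemma~\ref{lem:SBounds} yields $\sigma_o\norm{x-\tx}^2 \leq S(x,\tx) \leq \eta^2\sigma_r\norm{x-\tx}^2$, and inspecting the construction of Lemma~\ref{lem:S-SNegDef2} the contraction becomes linear, $\rho(r) = cr$ with $c = 1 - \sigma_o/(2\eta^2\sigma_r) \in [1/2,1)$, where $c<1$ follows from $\sigma_o \leq \eta^2\sigma_r$. Hence $S(x(kT),\tx(kT)) \leq c^k S(x_0,\tx_0)$, and using $k = \lfloor t/T\rfloor \geq t/T - 1$ with the non-increasing property gives $S(\xt,\txt) \leq c^{-1}e^{-\lambda t}S(x_0,\tx_0)$ for $\lambda = -\tfrac1T\ln c > 0$, via the elementary bound $c^{\lfloor t/T\rfloor} \leq c^{-1}e^{-\lambda t}$. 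The quadratic bounds on $S$ then produce $\norm{\xt-\txt} \leq d\,e^{-(\lambda/2)t}\norm{x_0-\tx_0}$ for a suitable $d>0$, i.e.\ incremental exponential stability with respect to $\U$.

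I expect the main obstacle to be the passage from the purely sampled-time contraction $S(x(kT),\tx(kT)) \leq \rho^k(S(x_0,\tx_0))$ to a genuine continuous-time $\KL$ estimate in the general, non-quadratic case: one must verify that the iterates of a $\Koo$ contraction admit a $\KL$ majorant and that interpolating across the sampling windows by means of Lemma~\ref{lem:SNonIncreas} preserves continuity and the required monotonicity. This is the standard but slightly technical discrete-to-continuous $\KL$ comparison. The exponential case sidesteps this difficulty entirely, since there $\rho$ is linear and the interpolation reduces to the explicit inequality $c^{\lfloor t/T\rfloor} \leq c^{-1}e^{-\lambda t}$.
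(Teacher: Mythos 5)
Your proposal is correct and follows essentially the same route as the paper: incremental dissipativity together with Lemmas~\ref{lem:SBounds}, \ref{lem:SNonIncreas}, \ref{lem:S-SNegDef} and~\ref{lem:S-SNegDef2} yields the sampled contraction $S(x(kT),\tx(kT)) \leq \rho^k(S(x_0,\tx_0))$, which the storage-function bounds then convert into the continuous $\KL$ (resp.\ exponential, with the same $\mu = 1 - \sigma_o/(2\eta^2\sigma_r)$ and the same $\lfloor t/T\rfloor$ interpolation) estimate. The ``main obstacle'' you flag --- majorizing $\rho^{\lfloor t/T\rfloor}(\psi(r))$ by a class $\KL$ function --- is precisely what the paper supplies as Proposition~\ref{prop:KLfunBound}, proved in the appendix via Lemma~\ref{lem:KLfun}, so your argument is complete once that proposition is invoked.
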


To prove the theorem, we shall need the next proposition, whose proof is given in the appendix.

\begin{proposition}
\label{prop:KLfunBound}
	Let $\rho,\psi \in \Koo$, with $\rho(r) < r$ for any $r > 0$, and let $T > 0$. Then there exists $\beta \in \KL$ such that $\rho^{\lfloor t/T \rfloor}(\psi(r)) \leq \beta(r,t)$, where $\lfloor t/T \rfloor$ denotes the largest integer less than or equal to $t/T$.
\end{proposition}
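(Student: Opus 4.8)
The plan is to dominate the step-function $t \mapsto \rho^{\lfloor t/T\rfloor}(\psi(r))$ by first building a continuous majorant in the discrete iteration index and then inserting the floor. The cornerstone is the decay of the iterates of $\rho$. Fix $s > 0$. Since $\rho \in \Koo$ with $\rho(0) = 0$ and $\rho$ strictly increasing, $\rho^k(s) > 0$ for every $k$, and the hypothesis $\rho(r) < r$ for $r > 0$ gives $\rho^{k+1}(s) = \rho(\rho^k(s)) < \rho^k(s)$. Hence $\{\rho^k(s)\}_{k \geq 0}$ is strictly decreasing and bounded below by $0$, so it converges to some $\ell \geq 0$; by continuity of $\rho$ one has $\rho(\ell) = \ell$, which combined with $\rho(r) < r$ for $r > 0$ forces $\ell = 0$. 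Thus $\rho^k(s) \to 0$ as $k \to \infty$ for each $s > 0$, while $\rho^k(0) = 0$.

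Next I would interpolate these values into a function $\hat\beta$ continuous in a real iteration parameter. For $\tau \in [k, k+1]$ and $\lambda = \tau - k$, set $\hat\beta(s,\tau) := (1-\lambda)\rho^k(s) + \lambda\rho^{k+1}(s)$. Then $\hat\beta(s, k) = \rho^k(s)$ at every integer, $\hat\beta$ is jointly continuous and, for $s > 0$, strictly decreasing in $\tau$ (a convex combination of the two consecutive, strictly ordered values $\rho^{k+1}(s) < \rho^k(s)$), with $\hat\beta(s,\tau) \to 0$ as $\tau \to \infty$ by the previous paragraph. For fixed $\tau$, each $\rho^k = \rho\circ\cdots\circ\rho$ lies in $\K$, so the convex combination $\hat\beta(\cdot,\tau)$ lies in $\K$ as well. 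Hence $\hat\beta \in \KL$.

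Finally I would set $\beta(r,t) := \hat\beta\big(\psi(r), \max\{0, t/T - 1\}\big)$. Composition with $\psi \in \Koo$ keeps $\beta(\cdot, t) \in \K$; the map $t \mapsto \max\{0, t/T - 1\}$ is continuous and non-decreasing, so $\beta(r, \cdot)$ is continuous, non-increasing and tends to $0$, whence $\beta \in \KL$. To verify the estimate, fix $t \in [kT, (k+1)T)$, so that $\lfloor t/T\rfloor = k$ and $\max\{0, t/T - 1\} < k$; since $\hat\beta(s, \cdot)$ is non-increasing this gives $\beta(r,t) \geq \hat\beta(\psi(r), k) = \rho^k(\psi(r)) = \rho^{\lfloor t/T\rfloor}(\psi(r))$, which is the claim. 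The $-1$ shift in the second argument is precisely what compensates for the fact that the step function $\rho^{\lfloor t/T\rfloor}$ retains the value $\rho^k$ on the whole half-open cell, including its right end.

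The main obstacle is not any single computation but the bookkeeping that certifies $\beta \in \KL$ while keeping the inequality in the right direction: one must check continuity across the interpolation seams and at $t = T$, confirm the monotonicity direction when replacing the piecewise-constant exponent $\lfloor t/T\rfloor$ by a continuous majorant (the role of the shift), and, if the convention demands a strictly decreasing $\beta(r, \cdot)$ on all of $\R_+$ rather than merely non-increasing on $[0,T]$, restore strictness harmlessly by adding a term such as $\psi(r)e^{-t}$, which preserves both membership in $\KL$ and the inequality.
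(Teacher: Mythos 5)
Your proof is correct, but it takes a genuinely different route from the paper's. The paper does not construct $\beta$ at all: it invokes an abstract characterization (Lemma~\ref{lem:KLfun}, quoted from Kellett's survey) stating that any $\zeta:\R_+\times\R_+\to\R_+$ satisfying a uniform-attractivity condition and a boundedness-near-zero condition admits a $\KL$ upper bound, and then verifies those two conditions for $\zeta(r,t)=\rho^{\lfloor t/T\rfloor}(\psi(r))$ in a few lines. You instead build $\beta$ explicitly: the fixed-point argument showing $\rho^k(s)\to 0$, linear interpolation of the iterates in the index variable to get a continuous $\hat\beta\in\KL$, and composition with $\psi$ and the shifted time $\max\{0,t/T-1\}$. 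Each approach buys something. The citation route is shorter and delegates the continuity and monotonicity bookkeeping to the quoted lemma; your route is self-contained and constructive, and, notably, it supplies a step the paper leaves implicit: in verifying condition (i) the paper asserts the existence of $k^\ast$ with $\rho^{k^\ast}(\psi(s))\le\varepsilon$ directly from $\rho^{k+1}(r)<\rho^k(r)$, which by itself only yields monotone decrease, possibly to a positive limit; your observation that the limit must be a fixed point of $\rho$ and hence zero is exactly what justifies that assertion. One microscopic slip on your side: for $t\in[0,T)$ one has $\max\{0,t/T-1\}=0=k$ rather than $<k$, but since you only use that $\hat\beta(s,\cdot)$ is non-increasing and the shifted argument is $\le k$, the claimed inequality still holds (with equality there), so nothing breaks; and your closing remark about adding $\psi(r)e^{-t}$ correctly repairs strict decrease on the initial plateau $[0,T]$ if the $\KL$ convention demands it.
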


\begin{proof}[Proof of Theorem~\ref{th:dL2+Obs/Reach->dAS}]
	\noindent\emph{Incremental asymptotic stability} -
	Since system~\eqref{eq:DynSys} is incrementally dissipative, there exists a storage function $S$ such that~\eqref{eq:DynSysAug} is dissipative with respect to the supply rate~\eqref{eq:IncSupRate}. Then, due to uniform observability on $\Rn$ with respect $\U$, there exist $T_o \geq 0$ and $\alpha_o \in \Koo$ such that~\eqref{eq:S-SNegDef} is satisfied for every $T \geq T_o$. Recursive application of~\eqref{eq:S-SNegDef2} from Lemma~\ref{lem:S-SNegDef2} yields
	\begin{equation}
		S\big(x(\tau + kT),\tx(\tau + kT)\big) \leq \rho^k\big(S\big(\xtau,\txtau\big)\big)
	\end{equation}
	for all $\tau \in [0,T)$, $k \in \N$ and $T \geq T_o$, where  $x(\tau + kT) = \phi(\tau + kT, 0, x_0, u_0)$ and $\tx(\tau + kT) = \phi(\tau + kT, 0, \tx_0, u_0)$, with $u_0 \in \U$. Using uniform reachability through Lemma~\ref{lem:SBounds} yields
	\begin{equation}
		S\big(x(\tau + kT),\tx(\tau + kT)\big) \leq \rho^k\big(\eta^2\alpha_r(\norm{\xtau - \txtau})\big)
	\end{equation}
	Lemma~\ref{lem:SNonIncreas} states that $S$ is nonincreasing, which allows us to write
	\begin{equation}
	\label{eq:IncAsStabAux2}
 		\alpha_o(\norm{\xtau - \txtau}) \leq S(\xtau,\txtau) \leq S(x_0,\tx_0) \leq \eta^2\alpha_r(\norm{x_0 - \tx_0})
	\end{equation}
	and then
	\begin{equation}
	\label{eq:IncAsStabAux}
		S\big(x(\tau + kT),\tx(\tau + kT)\big) \leq \rho^k\big(\psi(\norm{x_0 - \tx_0})\big)
	\end{equation}
	with $\psi := \eta^2\alpha_r\circ\alpha_o^{-1}\circ\eta^2\alpha_r \in \Koo$. Let us write $t = \tau + kT$, and note that $k = \lfloor t/T \rfloor$. Then, according to Proposition~\ref{prop:KLfunBound}, there exists $\hat{\beta} \in \KL$ such that
	\begin{equation}
		S\big(\xt,\txt\big) \leq \hat{\beta}(\norm{x_0 - \tx_0},t)
	\end{equation}
	Lastly, using~\eqref{eq:SBounds} from Lemma~\ref{lem:SBounds}, we can write
	\begin{equation}
		\norm{\xt - \txt} \leq \alpha_o^{-1}(\hat{\beta}(\norm{x_0 - \tx_0},t)) =: \beta(\norm{x_0 - \tx_0},t)
	\end{equation}
	with $\beta \in \KL$, which concludes the first part of the proof.

	\noindent\emph{Incremental exponential stability} -
	If system~\eqref{eq:DynSys} is quadratically reachable from $\xo$ and quadratically observable on $\Rn$ with respect to $\U$, we have $\alpha_o(s) = \sigma_os^2$ and $\alpha_r(s) = \sigma_rs^2$. From Lemma~\ref{lem:S-SNegDef2}, we may pick $\rho(s) = (\id - 0.5\alpha_o\circ\eta^2\alpha_r^{-1})(s) = \left(1 - \frac{\sigma_o}{2\eta^2\sigma_r}\right)s =: \mu s$, with $\mu \in (0,1)$. Similarly, $\psi$ becomes $\psi(s) = \sigma_o \hat{d}^2s^2$, with $\hat{d} := \frac{\eta^2\sigma_r}{\sigma_o} > 0$. With this is mind,~\eqref{eq:IncAsStabAux} becomes
	\begin{equation}
		S\big(x(\tau + kT),\tx(\tau + kT)\big) \leq \sigma_o \hat{d}^2 \mu^k\norm{x_0 - \tx_0}^2
	\end{equation}
	We use~\eqref{eq:SQBounds} from Lemma~\ref{lem:SBounds} to write
	\begin{equation}
		\norm{x(\tau + kT) - \tx(\tau + kT)}^2 \leq \hat{d}^2 \mu^k\norm{x_0 - \tx_0}^2
	\end{equation}
	and then
	\begin{equation}
		\norm{x(\tau+kT) - \tx(\tau+kT)} \leq \hat{d}\mu^\frac{k}{2}\norm{x_0 - \tx_0}
	\end{equation}
	Let us write $t := \tau + kT$. For $k \geq 1$, we are able to pick $0 < \lambda \leq -\frac{1}{4T}\log{\mu} \leq -\frac{k}{2(\tau + kT)}\log{\mu}$, to write
	\begin{equation}
	\label{eq:ExpStab}
	 	\norm{\xt - \txt} \leq \hat{d}e^{-\lambda t}\norm{x_0 - \tx_0}
	\end{equation}
	for every $t \geq T$. In view of~\eqref{eq:IncAsStabAux2}, $\norm{\xt - \txt}$ is bounded for $t \in [0,T]$, so that there exists $d \geq \hat{d}$ such that~\eqref{eq:ExpStab} with $\hat{d}$ replaced by $d$ is valid for every $t \geq 0$, which concludes the proof.
\end{proof}

Let us now consider the converse problem, i.e. when does incremental asymptotic stability imply \dLdg stability. This result may be established through the use of the connection between the \dLdg and the \Ldg of the system linearizations around every input $u \in \Ld$. For details, refer to~\cite{Fromion1996a}.

\begin{theorem}
	Assume that $f$ and $h$ in~\eqref{eq:DynSys} are Lipschitz continuous and differentiable functions. Additionally, assume that the jacobian $\partial f/\partial x$ is locally Lipschitz continuous. Under these conditions, if system~\eqref{eq:DynSys} is incrementally asymptotically stable with respect to $\LdepRp$, then it is incrementally \Ldg stable.
\end{theorem}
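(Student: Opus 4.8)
The plan is to characterize the \dLdg of $\Sigma$ through the \Ldg of the linearizations of~\eqref{eq:DynSys} along its trajectories, in the spirit of~\cite{Fromion1996a}. Fix the initial condition $\xo$ and regard $\Sigma$ as the input--output operator with inputs in $\LdpRp$. For an input $u\in\LdpRp$ with state trajectory $\xt=\phi(t,0,\xo,u)$, the propagation of an infinitesimal perturbation is governed by the linear time-varying (LTV) system
\[
	\dot{\delta x}=A_u(t)\,\delta x+B_u(t)\,\delta u,\qquad \delta y=C_u(t)\,\delta x+D_u(t)\,\delta u,\qquad \delta x(0)=0,
\]
with $A_u=\partial f/\partial x$, $B_u=\partial f/\partial u$, $C_u=\partial h/\partial x$, $D_u=\partial h/\partial u$ evaluated along $(\xt,\ut)$. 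The Lipschitz continuity of $f$ and $h$ bounds all four matrices uniformly in $t$ and in the operating input $u$, while the differentiability hypotheses ensure that the map $D\Sigma(u):\delta u\mapsto\delta y$ is a well-defined bounded linear operator on $\LdpRp$. The first step is thus to verify that $\Sigma$ is Fr\'echet differentiable as a map on $\LdpRp$, with derivative given exactly by this differential system; this is standard given the stated regularity.

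The crux is to show that incremental asymptotic stability forces the homogeneous part $\dot{\delta x}=A_u(t)\,\delta x$ to be \emph{uniformly} asymptotically stable, with estimates independent of $u$. I would obtain this by reading $\delta x$ as the limit of the normalized difference $\big(\xt-\txt\big)/\norm{\xo-\txo}$ of two trajectories sharing the input $u$, as $\norm{\xo-\txo}\to 0$. The class-$\KL$ estimate~\eqref{eq:dGAS} then passes to the limit and yields a decay bound on $\delta x$ that is uniform over $u$, since $\beta$ in~\eqref{eq:dGAS} is itself uniform over initial conditions and inputs. The local Lipschitz continuity of $\partial f/\partial x$ is essential here: it controls the remainder in $f(\xt,\ut)-f(\txt,\ut)=A_u(t)(\xt-\txt)+o(\norm{\xt-\txt})$ uniformly along trajectories, so that the $\KL$ bound on the nonlinear increment is genuinely inherited by the linearization rather than lost in the expansion and in the interchange of limits. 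Since $A_u$ is uniformly bounded, the classical equivalence for LTV systems upgrades uniform asymptotic stability of the family $\{A_u\}$ to uniform exponential stability.

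Uniform exponential stability of the homogeneous dynamics, together with the uniform bounds on $(A_u,B_u,C_u,D_u)$, then yields by standard LTV theory a finite \Ldg for each $D\Sigma(u)$, bounded by a single constant $\eta$ independent of $u$, so that $\sup_{u\in\LdpRp}\norm{D\Sigma(u)}\le\eta<\infty$. To conclude, I would run a mean-value argument in $\LdpRp$: for arbitrary $u,\tu\in\LdpRp$ with the common initial condition $\xo$, set $u_\theta=\tu+\theta(u-\tu)$ and write $\Sigma(u)-\Sigma(\tu)=\int_0^1 D\Sigma(u_\theta)[u-\tu]\,d\theta$; taking $\Ld$ norms and invoking the uniform bound gives $\norm{\Sigma(u)-\Sigma(\tu)}_2\le\eta\,\norm{u-\tu}_2$, which is exactly~\eqref{eq:IncL2cond}. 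The main obstacle is the intermediate claim of the second paragraph---transferring the nonlinear class-$\KL$ estimate into uniform exponential stability of the \emph{entire} family of linearizations, uniformly over all $u\in\LdpRp$. This is where the three regularity hypotheses do their work and where the technical heart of the proof lies; the Fr\'echet differentiability of the operator and the final mean-value estimate are comparatively routine.
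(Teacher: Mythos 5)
Your overall strategy---characterizing the \dLdg through the \Ldg of the linearizations of $\Sigma$ along its trajectories and closing with a mean-value argument in $\LdpRp$---is exactly the route the paper points to just before the theorem. The paper's own proof, however, is a pure reduction: it observes that incremental asymptotic stability with respect to $\LdepRp$ implies asymptotic stability of every unperturbed motion in the sense of Definition~2.5 of the cited reference, and then invokes Theorem~2 of that reference wholesale. Your Fr\'echet-differentiability and mean-value steps are the standard content of that cited theorem and are unobjectionable; the comparison therefore hinges entirely on the bridging step you attempt yourself and correctly identify as the technical heart.

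That step, as written, is a genuine gap. Dividing the estimate~\eqref{eq:dGAS} by $r=\norm{x_0 - \tx_0}$ and letting $r\to 0$ gives at best
\begin{equation}
	\norm{\delta x(t)} \leq \liminf_{r\to 0}\,\frac{\beta(r,t)}{r},
\end{equation}
and nothing in the definition of a class-$\KL$ function prevents this limit from being $+\infty$ for every $t$ (take $\beta$ behaving like $\sqrt{r}$ in its first argument). The normalization destroys the bound, so no decay---let alone one uniform over $u$---is inherited by the family $\{A_u\}$, and the ``classical LTV upgrade'' to uniform exponential stability has nothing to act on. The failure is substantive rather than technical: consider $\dot{x} = -\sigma(x) + u$, $y = x$, with $\sigma$ differentiable, globally Lipschitz, with locally Lipschitz derivative, and equal to $x^3$ near the origin. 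Since $\sigma$ is nondecreasing with a cubic lower bound near zero, one obtains $\frac{d}{dt}\lvert x - \tx\rvert \leq -\rho(\lvert x - \tx\rvert)$ for a positive definite $\rho$ independent of $u$, hence incremental asymptotic stability per Definition~\ref{def:dAS} with a genuine, non-exponential $\KL$ bound; yet the linearization along the zero trajectory is a pure integrator with infinite \Ldg, and small constant-on-$[0,T]$ inputs exhibit an unbounded ratio in~\eqref{eq:IncL2cond}. What your argument actually requires---and what the paper's reduction to Fromion's notion of asymptotic stability of the unperturbed motion quietly supplies---is an incremental estimate scaling \emph{linearly} in the initial separation (an exponential-type bound, $\beta(r,t)\leq d e^{-\lambda t}r$), under which your limit argument, the LTV upgrade, and the rest of the proof do go through. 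Starting from the bare $\KL$ bound of Definition~\ref{def:dAS}, the proof as proposed cannot be completed. A minor additional point: your limit must be taken at arbitrary initial times $t_0\geq 0$, not only $t_0 = 0$; this is repaired by stationarity of $\Sigma$ and the shift-invariance of $\LdepRp$, but it deserves a sentence.
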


\begin{proof}
	Incremental asymptotic stability with respect to $\LdepRp$ implies asymptotic stability of every unperturbed motion, as defined in~\cite[Definition 2.5]{Fromion1996a}. Then, the proof is achieved by applying Theorem 2 of~\cite{Fromion1996a}.
\end{proof}

\section{Concluding remarks}
\label{se:ConcRem}

In this \paper we have studied incremental stability properties of PWA systems. We have proposed sufficient conditions allowing to compute an upper bound on the \dLdg and to assess incremental asymptotic stability. The proposed conditions are shown to be less conservative than previous results in the literature, and their application is illustrated through numerical examples. Additionally, the connection between the two properties was highlighted through the use of appropriate concepts of observability and reachability.

The results presented in this \paper open up some perspectives for future research. Firstly, the tools developed in Section~\ref{se:AnalPWA} might be coupled with piecewise-affine approximation techniques (e.g. adapting the results in~\cite{Azuma2010,Zavieh2013}) to extend the analysis to more general nonlinear systems. A second perspective is to study the applicability of these results in the presence of Genuine Zeno behaviors. Finally, we aim to further study the observability of piecewise-affine systems, possibly deepening the connections between the conditions obtained in Section~\ref{se:AnalPWA}.

\appendix

\section{Appendix}
\label{se:Appendix}

\subsection{Lipschitz continuity of PWA systems}
\label{ss:LipCont}

As discussed in Section~\ref{se:IncStabNL}, a sufficient condition for the non-existence of sliding modes is Lipschitz continuity of the right-hand side of~\eqref{eq:PWAsysUY}. The following lemma, adapted from~\cite{Pavlov2007}, ensures continuity, which in turn implies Lipschitz continuity in view of Proposition~\ref{prop:LipContPWA}.

\begin{lemma}
\label{lem:PWALipsCont}
	The right-hand side of~\eqref{eq:PWAsysUY} is continuous if and only if $B_i = B_j$, $\forall i, j \in \I$ and for any two cells $X_i$ and $X_j$ having a common boundary $X_i \cap X_j \subseteq \left\{ x \in X \mid E_\ij x + e_\ij = 0\right\}$ the corresponding matrices $A_i$ and $A_j$ and the vectors $a_i$ and $a_j$ satisfy
	\begin{equation}
	 	\begin{aligned}
	 		gE_\ij &= A_i - A_j \\
	 		ge_\ij &= a_i - a_j 
	 	\end{aligned}
	\end{equation}
	for some vector $g \in \R^n$.
\end{lemma}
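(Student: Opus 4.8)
The plan is to reduce the global continuity of the right-hand side to a local matching condition on each shared boundary, and then to translate that condition into the stated algebraic relations via elementary affine geometry. Since on the interior of each cell $X_i$ the map $(x,u) \mapsto A_i x + a_i + B_i u$ is affine and hence continuous, the right-hand side of~\eqref{eq:PWAsysUY} is continuous on $X \times U$ if and only if, for every pair of adjacent cells $X_i$ and $X_j$, the two affine branches agree on the common boundary; that is,
\begin{equation}
	A_i x + a_i + B_i u = A_j x + a_j + B_j u
\end{equation}
for all $u \in U = \R^p$ and all $x \in X_i \cap X_j$.

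First I would separate the contributions of $u$ and $x$. Because the displayed equality must hold for every $u \in \R^p$, matching the $u$-dependent terms forces $(B_i - B_j)u = 0$ for all $u$, hence $B_i = B_j$. The residual requirement is then that the vector-valued affine function $x \mapsto (A_i - A_j)x + (a_i - a_j)$ vanish identically on $X_i \cap X_j \subseteq \{x \mid E_\ij x + e_\ij = 0\}$.

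The core step is to characterize the affine functions that vanish on this boundary. Since the cells have non-empty interiors and share a full-dimensional facet, the affine hull of $X_i \cap X_j$ is exactly the hyperplane $H = \{x \mid E_\ij x + e_\ij = 0\}$, so an affine function vanishing on $X_i \cap X_j$ vanishes on all of $H$. I would then apply, row by row, the standard fact that a scalar affine function vanishing on $H$ must be a scalar multiple of the defining functional $x \mapsto E_\ij x + e_\ij$: writing the $k$-th row of $A_i - A_j$ and the $k$-th entry of $a_i - a_j$ as $g_k E_\ij$ and $g_k e_\ij$ respectively, and stacking the multipliers into $g = (g_1,\ldots,g_n)^T \in \R^n$, yields precisely $A_i - A_j = g E_\ij$ and $a_i - a_j = g e_\ij$. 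This gives the forward implication. The converse is immediate: substituting these relations gives $(A_i - A_j)x + (a_i - a_j) = g(E_\ij x + e_\ij) = 0$ on $H$, and with $B_i = B_j$ the two branches coincide on every shared boundary, so the right-hand side is continuous.

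The main obstacle is the geometric affine-hull argument: one must ensure that the common boundary is full-dimensional within $H$, i.e. a genuine facet with non-empty relative interior, so that vanishing on $X_i \cap X_j$ transfers to vanishing of the full defining functional. This regularity is exactly what allows the boundary condition to be encoded by a single outer-product relation with one vector $g$, rather than by a weaker constraint on a lower-dimensional slice; it is also the reading of ``common boundary'' under which the stated equivalence is an exact iff.
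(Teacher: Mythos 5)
Your proof is correct, but there is nothing in the paper to compare it against: the paper states this lemma without proof, importing it as ``adapted from~\cite{Pavlov2007}'', so your argument fills a gap rather than paralleling an in-paper derivation. What you give is the standard proof underlying the cited result: a pasting reduction (continuity of a piecewise-affine map over closed cells is equivalent to agreement of the branches on each intersection, using that a closed convex set with non-empty interior is the closure of its interior, so boundary points are limits of interior points of both cells), elimination of $u$ to force $B_i = B_j$, and the row-wise factorization of an affine functional vanishing on the hyperplane $H = \{x \mid E_{ij}x + e_{ij} = 0\}$ through the defining functional, which stacks into $A_i - A_j = gE_{ij}$, $a_i - a_j = ge_{ij}$. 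Two small points deserve explicit mention. First, varying $u$ on a single shared boundary yields $B_i = B_j$ only for \emph{adjacent} pairs; the lemma's blanket claim for all $i,j \in \I$ requires chaining these equalities through the adjacency graph of the partition, which is connected when $X$ is (e.g.\ $X = \R^n$) -- a one-line remark you should add. Second, you correctly identify that the ``only if'' direction needs the common boundary to be a genuine $(n-1)$-dimensional facet, so that its affine hull is all of $H$ and vanishing on $X_i \cap X_j$ transfers to the full functional; with lower-dimensional intersections only the ``if'' direction survives. Flagging this regularity as the reading under which the equivalence is exact is the right call, and it is consistent with the setting of~\cite{Pavlov2007}.
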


\begin{proposition}
\label{prop:LipContPWA}
	If the piecewise-affine function $f(x,u) = A_ix + a_i + Bu$, for $x \in X_i$, is continuous with respect to $x$, then it is also globally Lipschitz continuous with respect to $x$ and $u$.
\end{proposition}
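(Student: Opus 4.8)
The plan is to reduce everything to the affine behavior on each individual cell and then to glue the local estimates together along straight line segments, using the continuity hypothesis to control the transitions across cell boundaries. Since the $u$-dependence occurs only through the single, region-independent matrix $B$, the map $u \mapsto Bu$ is globally Lipschitz with constant $\norm{B}$; hence the only substantive work concerns the $x$-dependence for a fixed $u$.

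First I would set $L := \max_{i \in \I}\norm{A_i}$, which is finite because the partition $\{X_i\}_{i\in\I}$ is finite. Given two points $x, x' \in X$, I would parametrize the segment between them by $z(s) = x + s(x'-x)$, $s \in [0,1]$, assuming $X$ convex (in particular $X = \R^n$) so that the segment remains in $X$. The key geometric observation is that each intersection of the segment with a cell $X_i$ is the intersection of two convex sets, hence a subsegment, and there are only finitely many cells; therefore $[0,1]$ decomposes into finitely many consecutive intervals $[s_k,s_{k+1}]$, on each of which $z(s)$ lies in a single cell $X_{i_k}$.

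On each such interval $f(\cdot,u)$ is affine with slope $A_{i_k}$, so $f(z(s_{k+1}),u) - f(z(s_k),u) = A_{i_k}\big(z(s_{k+1}) - z(s_k)\big)$. Here the continuity hypothesis is essential: at a breakpoint $z(s_k)$ lies on a shared boundary, and continuity guarantees that $f(z(s_k),u)$ takes the same value whichever adjacent cell is used, so that the telescoping identity $f(x',u) - f(x,u) = \sum_k \big[f(z(s_{k+1}),u) - f(z(s_k),u)\big]$ holds with no jumps. Taking norms and bounding each $\norm{A_{i_k}}$ by $L$ gives $\norm{f(x',u) - f(x,u)} \le L \sum_k \norm{z(s_{k+1}) - z(s_k)}$; since the $z(s_k)$ are collinear and ordered along the segment, the sublengths sum to $\norm{x'-x}$, yielding $\norm{f(x',u) - f(x,u)} \le L\,\norm{x'-x}$.

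Finally I would combine the two estimates by the triangle inequality, $\norm{f(x,u) - f(x',u')} \le \norm{f(x,u)-f(x',u)} + \norm{f(x',u)-f(x',u')} \le L\,\norm{x-x'} + \norm{B}\,\norm{u-u'}$, which is a global Lipschitz bound in $(x,u)$. The main obstacle is precisely the gluing across boundaries: without continuity the telescoping sum would accumulate uncontrolled jumps at the $s_k$, so the crux of the argument is to exploit the continuity hypothesis to make the piecewise-affine increments add up cleanly, together with the finiteness of the partition to guarantee only finitely many affine pieces along any segment.
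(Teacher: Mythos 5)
Your proposal is correct and follows essentially the same route as the paper's proof: decompose the segment joining the two states into finitely many subsegments lying in single cells, use continuity at the breakpoints to telescope the affine increments, bound by $\max_i\norm{A_i}$ times the sum of collinear sublengths (which equals $\norm{x-x'}$), and handle the input via $\norm{B}$. Your explicit remark that each cell meets the segment in a subsegment by convexity, and your splitting of the $(x,u)$ estimate via the triangle inequality, are minor presentational refinements of the same argument.
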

\begin{proof}
	We need to show that there exist $L_x$ and $L_u$ such that
	\begin{equation}
	 	\norm{f(x,u) - f(\tx,\tu)} \leq L_x\norm{x - \tx} + L_u\norm{u - \tu}
	\end{equation}
	The case $x,\tx \in X_i$ is trivial. Let us consider the case where $x\in X_i$ and $\tx\in X_j$, for $i \neq j$. There exists a segment joining $x$ and $\tx$ passing through $r$ regions, and then there exist $r+1$ points $x_0, \ldots, x_r$, with $x_0 = x$, $x_r = \tx$, so that each $x_\ell$, for $\ell \in \{1,\ldots,r-1\}$, lies in the intersection between two regions. Let $\sigma: \{1,\ldots,r\}\rightarrow \I$ be such that $x_\ell \in X_{\sigma(\ell)} \cap X_{\sigma(\ell+1)}$, for $\ell \in \{1,\ldots,r-1\}$. Using continuity and standard norm properties, we may write
	
	\begin{align}
		\norm{f(x,u) - f(\tx,\tu)} &= \norm{A_ix + a_i + Bu - A_j\tx - a_j - B\tu} \nonumber \\
		&\leq \left\|A_{\sigma(1)}x_0 + a_{\sigma(1)} - (A_{\sigma(1)}x_1 + a_{\sigma(1)}) + (A_{\sigma(2)}x_1 + a_{\sigma(2)}) - \cdots\right. \nonumber \\
		&\phantom{\leq}\left. \cdots - (A_{\sigma(r)}x_r + a_{\sigma(r)})\right\| + \norm{B(u - \tu)} \nonumber \\
		&= \norm{A_{\sigma(1)}(x_0 - x_1) + \cdots + A_{\sigma(r)}(x_{r-1} - x_r)} + \norm{B(u - \tu)} \nonumber \\
		&= \norm{\sum_{\ell = 1}^{r}A_{\sigma(\ell)}(x_{\ell - 1} - x_\ell)} + \norm{B(u - \tu)} \nonumber \\
		&\leq \max_\ell \left\{\norm{A_{\sigma(\ell)}}\right\}\sum_{\ell = 1}^{r}\norm{(x_{\ell - 1} - x_\ell)} + \norm{B}\norm{u - \tu} \nonumber \\
		&=: L_x\norm{x - \tx} + L_u\norm{u - \tu}
	\end{align}
	where the last equality comes from the fact that $\lbrace x_{\ell-1} - x_\ell\rbrace_{ \ell \in \{1,\ldots,r\} }$ forms a partition of the line segment joining $x$ and $\tx$, the state partition is finite and matrices $A_i$, $\forall i \in \I$, and $B$ are bounded. This proves the claim.
\end{proof}

\subsection{Proof of proposition~\ref{prop:KLfunBound}}
\label{ss:ProofProp}

We provide here a proof of Proposition~\ref{prop:KLfunBound}. We begin by recalling the next lemma from~\cite{Kellett2014}.

\begin{lemma}
\label{lem:KLfun}
	Suppose a function $\zeta: \Rp \times \Rp \rightarrow \Rp$ satisfies
	\begin{enumerate}[(i)]
		\item for all $s, \epsilon \in \Rpa$, there exists some $\tau = \tau(s,\varepsilon) \in \Rpa$ such that $\zeta(r,t) < \varepsilon$ for all $r \leq s$ and $t \geq \tau$;
		\item for all $\varepsilon \in \Rpa$ there exists $\delta \in \Rpa$ such that $\zeta(r,t) \leq \varepsilon$ for all $r \leq \delta$ and all $t \in \Rp$
	\end{enumerate}
	Then there exists some $\beta \in \KL$ such that $\zeta(r,t) \leq \beta (r,t)$ for all $r,t \in \Rp$
\end{lemma}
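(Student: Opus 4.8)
The plan is to reduce the stated domination problem to a purely constructive one: first replace $\zeta$ by a majorant that is already monotone in each variable, and then regularize that majorant into a genuine $\KL$ function. First I would set $\omega(r,t) := \sup\{\zeta(r',t') : 0 \le r' \le r,\ t' \ge t\}$. By construction $\omega \ge \zeta$; moreover $\omega$ is nondecreasing in $r$ (the supremum ranges over a larger set) and nonincreasing in $t$ (the supremum ranges over a smaller set). Hypothesis~(ii) applied at $r=0$ forces $\zeta(0,\cdot)\equiv 0$, whence $\omega(0,t)=0$ for all $t$.

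Next I would transfer the two hypotheses to $\omega$. From~(i), given $r$ and $\varepsilon$ there is $\tau$ with $\zeta(r',t')<\varepsilon$ whenever $r'\le r$ and $t'\ge\tau$; taking $t\ge\tau$ in the definition of $\omega$ gives $\omega(r,t)\le\varepsilon$, so $\lim_{t\to\infty}\omega(r,t)=0$ for each fixed $r$. From~(ii), given $\varepsilon$ there is $\delta$ with $\zeta(r',t')\le\varepsilon$ for all $r'\le\delta$ and all $t'$; hence $\omega(r,t)\le\varepsilon$ for all $r\le\delta$ and all $t$, i.e. $\sup_{t\ge0}\omega(r,t)\to 0$ as $r\to0^+$. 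Thus $\omega$ is a monotone majorant of $\zeta$ decaying to $0$ both as $t\to\infty$ (pointwise in $r$) and as $r\to0^+$ (uniformly in $t$).

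The step I expect to be the main obstacle is the \emph{finiteness} of $\omega$: hypotheses~(i) and~(ii) only control $\zeta$ for $t'$ large or for $r'$ small, so a priori $\sup_{t'\ge t}\zeta(r,t')$ could be infinite on a bounded time window, in which case no finite $\beta$ could dominate $\zeta$ (as $\beta(r,\cdot)$ is nonincreasing). I would therefore invoke the standing boundedness of $t\mapsto\zeta(r,t)$ for each $r$ that accompanies these hypotheses in \cite{Kellett2014}; in the only use made here, Proposition~\ref{prop:KLfunBound}, the map $t\mapsto\zeta(r,t)$ is in fact nonincreasing and $\zeta$ is nondecreasing in $r$, so $\omega\equiv\zeta$ and finiteness is immediate.

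Finally I would regularize $\omega$ into a $\KL$ function dominating it, exploiting its monotonicity. Averaging in $r$ over a window such as $[r,2r]$, i.e. passing to $r\mapsto\frac{1}{r}\int_r^{2r}\omega(\rho,t)\,d\rho$, yields continuity in $r$ while preserving monotonicity and, since $\omega$ is nondecreasing in $\rho$, still dominates $\omega(r,t)$; an analogous backward average in $t$ restores continuity in $t$; and adding a term like $r\,e^{-t}$ makes the result continuous, strictly increasing in $r$ from the value $0$ at $r=0$, and strictly decreasing in $t$ with limit $0$. Calling the outcome $\beta$, the limits established for $\omega$ give $\beta(\cdot,t)\in\K$ for each $t$ and $\lim_{t\to\infty}\beta(r,t)=0$ for each $r$, so $\beta\in\KL$ and $\zeta\le\omega\le\beta$, which is the claim. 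I expect this last regularization to be only routine, if tedious, bookkeeping; the conceptual crux is the finiteness of $\omega$ discussed above.
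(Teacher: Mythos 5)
The paper offers no proof of this lemma at all: it is recalled verbatim from \cite{Kellett2014}, so any argument you give is necessarily a different route. Yours is the standard two-step construction (monotone envelope, then mollification), and it is essentially sound: the envelope $\omega(r,t)=\sup\{\zeta(r',t'): r'\le r,\ t'\ge t\}$ inherits the decay in $t$ from (i) and the smallness near $r=0$ from (ii) exactly as you say; the forward average in $r$ over $[r,2r]$ dominates $\omega$ because $\omega(\cdot,t)$ is nondecreasing, the backward average over $[t-1,t]$ (extending $\omega(r,s)=\omega(r,0)$ for $s<0$) dominates because $\omega(r,\cdot)$ is nonincreasing, and adding $r e^{-t}$ supplies strict monotonicity in $r$; continuity of the averages needs local boundedness of $\omega$, which is the same finiteness issue you flag. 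What your proof buys over the paper's citation is precisely that it makes explicit a hypothesis the statement silently needs: you are right that finiteness of $\omega$ is the crux, and in fact the lemma \emph{as stated} is false without a boundedness assumption. Take $\zeta(r,t)=\max(r-1,0)\,g(t)$ with $g(t)=1/|1-t|$ for $t\in[0,2]$, $t\neq 1$, $g(1)=0$, and $g(t)=0$ for $t>2$: condition (i) holds with $\tau=3$ for every $(s,\varepsilon)$, condition (ii) holds with $\delta=1$, yet $\zeta(2,\cdot)$ is unbounded on $[0,1)$, so no continuous $\beta(2,\cdot)$ can dominate it.

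One refinement to your fix, though. The boundedness you propose to import --- boundedness of $t\mapsto\zeta(r,t)$ for each fixed $r$ --- is still not quite enough for the envelope construction in general: $\omega(r,t)$ involves $\sup_{r'\le r}\sup_{t'}\zeta(r',t')$, which can be infinite even when each inner supremum is finite. For instance, let $\zeta(r,t)=\phi(r)$ for $t\in[0,1]$ and $\zeta(r,t)=0$ for $t>1$, with $\phi(r)=(r-1)/(2-r)$ on $1<r<2$ and $\phi(r)=0$ otherwise: (i) and (ii) hold, $\sup_t\zeta(r,t)=\phi(r)<\infty$ for every $r$, but $\omega(2,0)=\infty$ and indeed no $\beta\in\KL$ dominates, since $\beta(2,0)\ge\sup_{r<2}\phi(r)$. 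What the construction actually requires is boundedness \emph{uniform over} $r\le s$ for each $s$, i.e.\ $\sup_{r\le s,\,t\ge 0}\zeta(r,t)<\infty$. Your fallback observation rescues everything where it matters: in the lemma's sole use here, Proposition~\ref{prop:KLfunBound}, $\zeta(r,t)=\rho^{\lfloor t/T\rfloor}(\psi(r))$ is nondecreasing in $r$, nonincreasing in $t$, and bounded by $\psi(r)$, so $\omega=\zeta\le\psi(r)$, the uniform bound is automatic, and your regularization then goes through (note the staircase in $t$ means the mollification step is still genuinely needed, so your proof is complete for the paper's purposes).
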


We can now state the aforementioned proof.

\begin{proof}[Proof of Proposition~\ref{prop:KLfunBound}]
	It suffices to show that $\zeta(r,t) := \rho^{\lfloor t/T \rfloor}(\psi(r))$ satisfies the conditions in Lemma~\ref{lem:KLfun}.
	
	\noindent\emph{(i)} - We know that $\rho^{k+1}(r) < \rho^k(r)$, $\forall k \in \N$. Then, for a fixed $s$ and $\varepsilon > 0$, there exists $k^\ast$ such that $\rho^{k^\ast}(\psi(s)) \leq \varepsilon$. Let us choose $\tau = k^\ast T$ to write $\zeta(s,\tau) \leq \varepsilon$. Since $\rho,\psi \in \Koo$, the inequality is valid for all $r \leq s$, and since $\rho^{\lfloor t/T \rfloor}$ is non-increasing on $t$, it is valid for every $t \geq \tau$.
	
	\noindent\emph{(ii)} - Since $\rho(r) < r$, the function $\zeta$ is such that $\zeta(r,t) \leq \psi(r)$, for all $t \in \Rp$. Then, since $\psi \in \Koo$, for all $\varepsilon > 0$ we can pick $\delta = \psi^{-1}(\varepsilon)$ such that $\zeta(r,t) \leq \varepsilon$ for all $r \leq \delta$ and all $t \in \Rp$.
	
	This proves the proposition.	
\end{proof}

\bibliographystyle{myhplain}
\bibliography{Ref_arXiv2016}

\end{document}